\newenvironment{proof}{\noindent {\bf Proof:} \hspace{.4em}}
                      {\hspace{\fill}{$\blacksquare$} \smallskip}
\newenvironment{proofL}{\hspace{.4em}}
                      {\hspace{\fill}{$\blacksquare$} \smallskip}
\newtheorem{theorem}{Theorem}[subsection]
\newtheorem{lemma}[theorem]{Lemma}
\newtheorem{obs}[theorem]{Observation}
\newtheorem{definition}[theorem]{Definition}
\newtheorem{open}{Open Problem}
\newcommand{\troot}{t_{root}}
\newcommand{\troots}{{t_{root}}^*}
\newcommand{\tbss}{{t_b}^{**}}
\newcommand{\tbs}{{t_b}^{*}}
\newcommand{\trootsFirst}{{t_{root}}^{*First}}
\newcommand{\tbith}{t_b[{P_i}]}
\newcommand{\tblast}{t_b[{P_{last}}]}
\begin{document}

\thispagestyle{empty}

\vspace*{1.4in}

\begin{scshape}\begin{center}\fontsize{22}{22}\selectfont Fractional Pebbling Game Lower Bounds\end{center}

\vspace*{0.10in}

\begin{center}\noindent Frank Vanderzwet\end{center}

\vspace*{0.25in}

\begin{center}\noindent Advisors : Stephen Cook and Toniann Pitassi \end{center}

\vspace*{0.15in}

\begin{center}Department of Computer Science\\University of Toronto\end{center}

\vspace*{0.15in}

\begin{center}Draft of \today\end{center}\end{scshape}

\newpage

\tableofcontents

\newpage

\setcounter{section}{0}

\section{Introduction} 


The primary result in this paper involves the $fractional $ $pebbling$ $game$. The origins of the $fractional $ $pebbling$ $game$ is the $black$ $pebbling$ $game$. The $black$ $pebbling$ $game$ was introduced by Paterson and Hewitt \cite{ph:comsc} to compare the power of programming languages. Since this time, variations of the pebbling game have been used in many areas of Computer Science.

The pebbling game and related terms are more rigorously defined in Section 2. The definitions presented in this paper are reinterpretations of the definitions for pebbling games presented by Cook, McKenzie, Wehr, Braverman, and Santhanam \cite{c:pebjournal}. Surveys of the pebbling games are available, \cite{ppgr:survey} and \cite{nord:survey2}.

The game is played on DAGs. Each node in the DAG may have up to one pebble. Configurations are allocations of pebbles to nodes. There is one distinguished node. The goal is to reach a configuration that has a pebble on the distinguished node while the final configuration must end with no pebbles in the DAG. Configurations of pebbles are changed from one to another via the following moves : 

\begin{itemize}

\item
Place or remove a black pebble on a leaf node

\item
Place a black pebble on a node that has all children pebbled

\item
Remove a black pebble from a node

\end{itemize}

For the $black$ $pebbling$ $game$, lower and upper bounds for balanced trees are given in \\\cite{c:pebjournal}. Similar, motivational, lower and upper bounds are replicated in Section 3. 

Each node can be thought of as having a value and a method of determining that value from the value of its children. Black pebbles can be thought of as values deterministically computed from previous values. This analogy is essentially the $tree$ $evaluation$ $problem$ \cite{c:pebjournal}.\\

Branching programs are a nonuniform model of a Turing machines. Branching programs are directed multi-graphs whose nodes are states. Every edge is labelled with a value. There is one initial state from which the computation starts. Every state queries a variable and branches to new states along edges labelled with that value. These computations may eventually reach accepting or rejecting states.

A state in a branching program corresponds to a turing machine configuration. Thus  L $\ne$ P if we can show the branching programs solving a problem in P requires a superpolynomial number of states. 

With this goal, \cite{c:pebjournal} examined a restricted class of branching programs. A thrifty branching program for the $tree$ $evaluation$ $problem$ must query the value of the functions only at the correct value of the children. The thrifty hypothesis states that thrifty branching programs are optimal among all branching programs.

Under the thrifty hypothesis, $black$ $pebbling$ $game$ lower bounds allow for a proof of deterministic branching program lower bounds which separate L from P \cite{c:pebjournal}. It is hoped that $fractional $ $pebbling$ $game$ lower bounds allow for a similar proof for nondeterministic branching programs, which under the thrifty hypothesis, would separate NL from P.\\

Another variation of the pebbling game is the whole $black$-$white$ $pebbling$ $game$. It was introduced by Cook and Sethi \cite{cs:pfromnl} in an attempt to separate  NL and P. It is similar to the $black$ $pebbling$ $game$ except the rules for changing one configuration to another are the following :

\begin{itemize}

\item
Place or remove a pebble on a leaf node

\item
Place a black pebble on a node that has all children pebbled

\item
Remove a black pebble from a node

\item
Place a white pebble on a node

\item
Remove a white pebble from a node that has all children pebbled

\end{itemize}

White pebbles can be thought of as non-deterministic guesses for values. When we removed them we have essentially justified those guesses.\\

The pebbling games are important due to their relation to propositional proof complexity, particularly resolution. For this purpose, the whole $black$-$white$ $pebbling$ $game$ is usually used. Aspects of the game are encoded as a CNF formulas. Properties of the formulas are then argued based on properties of the pebbling game. \cite{nord:survey2} produced a survey of how the pebbling games relate to proof complexity.

Aleknovich showed a separation between regular and general resolution using a problem that is a modified version of the whole $black$-$white$ $pebbling$ $game$ \cite{alek:sepres}.

Using the pebbling contradiction problem derived from the pebbling game, Nordstrom showed resolution refutations of small widths may have large space requirements \cite{n:narrowpf}. Ben-Sasson showed, using the same pebbling contradictions, trade-offs between time size space and width of resolution \cite{bs:wsstrade}.\\

Motivated by proving lower bounds for branching programs \cite{c:pebjournal} recently introduced the $fractional$ $pebbling$ $game$. The $fractional $ $pebbling$ $game$ is a generalization of the whole $black$-$white$ $pebbling$ $game$. 

The rules are similar to those presented in the whole $black$-$white$ $pebbling$ $game$ except we now allow for fraction of pebbles. 

The $fractional $ $pebbling$ $game$ should better represent the non-deterministic approach to the problem than the whole $black$-$white$ $pebbling$ $game$. Fractions of pebbles can be thought of as partially specifying the possible values of a node. This intuitively is helpful and seems less restrictive than the whole $black$-$white$ $pebbling$ $game$. We confirm that this is helpful by showing smaller lower bounds for the $fractional $ $pebbling$ $game$ than are possible for the whole $black$-$white$ $pebbling$ $game$. These lower bounds match upper bounds presented in \cite{c:pebjournal} for the $fractional $ $pebbling$ $game$.

The main theorem we show in this paper (Section 5.3) relies on the $fractional$ $pebbling$ $game$. Let  $T^h_d$ be the balanced $d$-ary tree of height $h$. Let $min_h = (d-1) * h/2+1$. Let the root node be the node that must be pebbled. \\

\noindent
{\bf Main Theorem}

\noindent
In every fractional pebbling of $T^h_d$, where the distinguished node is the root, there is a configuration such
that the number of pebbles is greater than or equal to $min_h$.\\

Loose lower bounds for this problem were presented in \cite{c:pebjournal} and tight lower bounds were left as an open problem. In that case the lower bounds for the problem came from a reduction to a paper by Klawe \cite{k:bwpyr} which proves the bounds for pyramid graphs rather than balanced trees. Accuracy is lost in the reduction. We present tight lower bounds for balanced trees of any degree by taking a more direct approach.

We will solve this problem using a $shifting$ argument. The idea in our shifting argument is that if we use less pebbles before placing a pebble on the root we use more pebbles after placing a pebble on the root. We proceed in this manner since we must cover a larger range of pebbling strategies once we allow for fractional pebbles.\\

\subsection{Organization} 

The organization of this paper is as follows. Section 2 defines the pebbling game and associated terms. It first defines the $black$ $pebbling$ $game$ and the whole $black$-$white$ $pebbling$ $game$. It then defines the $half$ $pebbling$ $game$ and the $fractional $ $pebbling$ $game$ as modifications of the whole $black$-$white$ $pebbling$ $game$. Further we define terms related to all games. Section 3 first demonstrates upper bounds for the $black$ $pebbling$ $game$. It then demonstrates lower bounds for the $black$ $pebbling$ $game$. In Section 4 we show upper and lower bounds for the whole $black$-$white$ $pebbling$ $game$. Section 5 shows upper bounds for the half and $fractional $ $pebbling$ $game$s and concludes by showing $fractional $ $pebbling$ $game$ lower bounds.

\newpage

\section{Preliminaries} 

In Section 3 we examine the $black$ $pebbling$ $game$. We next present definitions and rules needed for the $black$ $pebbling$ $game$ played on DAGs.

\begin{definition}

A {\bf black pebble configuration} on a DAG is an
assignment of values b(i) to each node i of the tree, where\\
b(i) = 0 or b(i) = 1\\
We let b(i) represent the {\bf black pebble weight value} of i.
\end{definition}

\begin{definition}
\noindent
A {\bf black pebble move} changes one black pebble configuration into another. Possible black pebble moves are :\\
(i) For any node i, decrease b(i) from 1 to 0\\
(ii) For any node i, if each child of i has pebble value 1, increase b(i) to 1, and optionally decrease any of the black pebble values of the children of i to 0\\
(iii) For each leaf node i, increase b(i) to 1
\end{definition}

For (ii), if we choose to decrease the black pebble value of the children it is done simultaneously, this is called a {\bf black sliding move}.

\begin{definition}
\noindent
A {\bf black pebbling} $\pi$ is a sequence $m_1,m_2,\ldots$
of black pebble moves resulting in a sequence $c_0,c_1,c_2,\ldots$,
of black pebble configurations, where $c_0$ is the initial configuration,
and for $t>0$, $c_t$ is the configuration after move $m_t$.\\
\end{definition}

We next present definitions needed for the $whole$ $black$-$white$ $pebbling$ $game$. Upper bounds for this game are presented in Section 4.

\begin{definition}

A {\bf whole black-white pebble configuration} on a DAG, is an
assignment of a pair of numbers (b(i),w(i)) to each node i of the tree, where\\
b(i) = 0 or b(i) = 1,\\
w(i) = 0 or w(i) = 1 and\\
b(i) + w(i) $\leq$ 1\\
Here b(i) and w(i) are the {\bf  black pebble weight value} and the {\bf white pebble weight value}, respectively, of node i,
and b(i) + w(i) is the {\bf pebble weight} of node i.
\end{definition}

\begin{definition}
\noindent
A {\bf whole black-white pebbling move} changes one whole black-white pebble configuration into another. Possible whole black-white pebble moves are :\\
(i) For any node i, set b(i) to 0\\
(ii) For any node i, if each child of i has pebble value 1, set w(i) to 0, increase b(i) to 1, and optionally decrease any of the black pebble weight values of the children of i to 0\\
(iii) For any node i, increase w(i) to 1\\
(iiii) For each leaf node i, increase b(i) to 1
\end{definition}

\begin{definition}
\noindent
A {\bf whole black-white pebbling} $\pi$ is a sequence $m_1,m_2,\ldots$
of whole black-white pebble moves resulting in a sequence $c_0,c_1,c_2,\ldots$,
of whole black-white pebble configurations, where $c_0$ is the initial configuration,
and for $t>0$, $c_t$ is the configuration after move $m_t$.\\
\end{definition}

In Section 5.1 we use a variation of the whole $black$-$white$ $pebbling$ $game$ wherein we additionally allow b(i) and w(i) to be 0.5. We call this variation the $half$ $pebbling$ $game$. This closely resembles the $fractional $ $pebbling$ $game$ defined next.\\

In Section 5.2 and 5.3 we use a variation of the whole $black$-$white$ $pebbling$ $game$ that allows b(i) and w(i) to be any real number in [0,1]. We call this variation the $fractional$ $pebbling$ $game$:

\begin{definition}

A {\bf fractional pebble configuration} on a DAG, is an
assignment of a pair of real numbers (b(i),w(i)) to each node i of the tree, where\\
0 $\leq$ b(i),w(i) and\\
b(i) + w(i) $\leq$ 1\\
Here b(i) and w(i) are the {\bf  black pebble weight value} and the {\bf white pebble weight value}, respectively, of node i,
and b(i) + w(i) is the {\bf pebble weight} of node i.
\end{definition}

\begin{definition}
\noindent
A {\bf fractional pebble move} changes one fractional pebble configuration into another. Possible fractional pebble moves are :\\
(i) For any node i, decrease b(i) arbitrarily\\
(ii) For any node i, if each child of i has pebble value 1, decrease w(i) to 0, increase b(i) arbitrarily, and optionally decrease the black pebble weight values of the children of i arbitrarily\\
(iii) For any node i, increase w(i) such that b(i) + w(i) = 1\\
(iiii) For each leaf node i, increase b(i) arbitrarily 
\end{definition}

\begin{definition}
\noindent
A {\bf fractional pebbling} $\pi$ is a sequence $m_1,m_2,\ldots$
of fractional pebble moves resulting in a sequence $c_0,c_1,c_2,\ldots$,
of fractional pebble configurations, where $c_0$ is the initial configuration,
and for $t>0$, $c_t$ is the configuration after move $m_t$.\\
\end{definition}

We additionally define the following terms and symbols important to all variations of the games.

\begin{definition}
We refer to a configuration $c_t$ as the {\bf time} t.
\end{definition}

\begin{definition}
We let {\bf 0} denote the initial configuration, equivalently the {\bf initial time}.
\end{definition}

\begin{definition}
The {\bf weight}, $w_\pi(t)$, of $\pi$ at time $t$ is sum of the
pebble weights on $T$ in configuration $c_t$. The {\bf subtree weight}, $sw_\pi(t)$, of $\pi$ at time $t$ is the sum of the pebble weights in the principal subtrees of $T$ in configuration $c_t$. 
The {\bf white subtree weight}, $w.sw_\pi(t)$, of $\pi$ at time $t$ is the sum of the white pebble weights in the principal subtrees of $T$ in configuration $c_t$. The {\bf black subtree weight}, $b.sw_\pi(t)$, of $\pi$ at time $t$ is the sum of the black pebble weights in the principal subtrees of $T$ in configuration $c_t$. The {\bf root weight}, $rw_\pi(t)$, of $\pi$ at time $t$ is the pebble weight on the root of $T$ in configuration $c_t$. The {\bf black root weight}, $b.rw_\pi(t)$, of $\pi$ at time $t$ is the black pebble weight on the root of $T$ in configuration $c_t$. The {\bf white root weight}, $w.rw_\pi(t)$, of $\pi$ at time $t$ is the white pebble weight on the root $T$ in configuration $c_t$.\\

\noindent
Square brackets after the symbols defined above are used to indicate in which tree or subtree the pebble weight is located. For example, the symbol $b.rw_\pi(t)$[$P_{last}$] would be used to specify some amount of black pebble weight on the root of the tree $P_{last}$ at time t. If it is not specified, the symbol is assumed to pertain to the entire tree.
\end{definition}

\begin{definition}
\noindent
A {\bf root-pebbling} is a pebbling that requires that the initial and final pebble weights of $\pi$ are 0, and 
$rw_\pi(t)=1$ at some time $t$. 

\noindent
A {\bf sub-pebbling} is a pebbling that may start or end with pebble weight. It may initially have arbitrary white pebble weight and at the end of the pebbling it may have arbitrary black pebble weight. It may also have some specified initial black pebble weight. At the end of the pebbling it has no white pebble weight.

\noindent
A {\bf root sub-pebbling} is a sub-pebbling such that $rw_\pi(t)=1$ at some time $t$. 

\noindent
Similarly, a {\bf sub-root sub-pebbling} is a sub-pebbling such that the subtrees of $T$ have $rw_\pi$(t)=1 at some time $t$.
\end{definition}

\begin{lemma}
If $\pi_1$ is a sub-pebbling with initial white and black pebble weight, and $w_{\pi_1}(t) \leq P$ for all times $t$ then there exists a sub-pebbling $\pi_2$ with the same initial black pebble weight and no white pebble weight such that $w_{\pi_2}(t) \leq P$ for all times $t$.
\end{lemma}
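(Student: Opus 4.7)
The plan is to simulate $\pi_1$ move by move in $\pi_2$, starting from the same initial black weight but with every initial white weight set to zero, and to insert extra moves of type (iii) exactly when $\pi_1$'s initial white weight is silently carrying a ``pebble value $1$'' precondition of a move of type (ii). The main invariant I will maintain is that at every time $t$ the two pebblings agree on all black weights $b(i)$, while $\pi_2$'s white weight on every node is at most $\pi_1$'s, so in particular $w_{\pi_2}(t) \le w_{\pi_1}(t) \le P$. Call a node $v$ \emph{uncleared} if it had positive initial white weight in $\pi_1$ and no move of type (ii) or (iii) has yet been targeted at $v$; then the gap $w_{\pi_1}(t) - w_{\pi_2}(t)$ is precisely the sum of the initial white weights on the currently uncleared nodes.

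Moves of types (i), (iii), and (iiii) translate directly from $\pi_1$ to $\pi_2$: they either only affect $b(v)$ (with identical effect in both pebblings since the black weights agree) or they reset $w(v)$ to $1 - b(v)$, which is identical in both because $b(v)$ is identical, and in the latter case $v$ becomes cleared. A move (ii) on $v$ whose children have all already been cleared also translates directly, since then the children's pebble values are the same in both pebblings and the move sets $w(v)$ to $0$ in each, preserving the invariant.

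The only obstruction arises for a move (ii) on a node $u$ at least one of whose children $c$ is still uncleared, because then $c$'s pebble value in $\pi_2$ is strictly less than $1$. I would remedy this by inserting, for each such child $c$, a move (iii) on $c$ in $\pi_2$ immediately before the translated move (ii); this raises $w_{\pi_2}(c)$ to $1 - b(c)$, which equals $w_{\pi_1}(c)$ exactly, because $\pi_1$'s move requires $b(c) + w_{\pi_1}(c) = 1$. After these insertions every child of $u$ is cleared, so the move (ii) on $u$ is applicable in $\pi_2$ and restores the invariant. The main obstacle, and the heart of the proof, will be the weight accounting for these insertions: the total weight added to $\pi_2$ during the fixup equals the sum of the uncleared initial white weights on the children of $u$, which is at most the overall current saving $w_{\pi_1}(t) - w_{\pi_2}(t)$, so the intermediate weight of $\pi_2$ is bounded above by $w_{\pi_1}(t) \le P$ at every step.
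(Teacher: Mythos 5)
Your proof is correct, but it takes a genuinely different (and heavier) route than the paper. The paper's argument is a one-line prepend: $\pi_2$ simply begins by placing all of $\pi_1$'s initial white pebble weight on the same nodes (white placements are unconditional moves), at which point the configuration coincides with $\pi_1$'s initial one, and then $\pi_2$ copies $\pi_1$ verbatim; during the prefix the weight never exceeds $\pi_1$'s initial weight, which is at most $P$, so the bound is immediate. You instead run a lazy, just-in-time simulation, deferring each white placement until a move of type (ii) actually needs that child at pebble value $1$, and maintaining the pointwise invariant that black weights agree while $w_{\pi_2}(t) \le w_{\pi_1}(t)$. Your bookkeeping (the gap equals the sum of initial white weights on uncleared nodes, and each fixup spends only part of that gap) is sound, and your argument yields a slightly stronger conclusion (domination of the weight at every step, not just of its maximum) while also meshing cleanly with the paper's fractional rule (iii), under which a white placement must bring $b(i)+w(i)$ to $1$ — a point the paper's ``place the same white weight as initially'' step quietly glosses over when an initially white node has total weight below $1$. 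The price is complexity: the paper's prepend argument makes the lemma's message (initial white weight cannot help) evident in two sentences, whereas your move-by-move translation with inserted moves requires the invariant and the clearing analysis to be checked against every move type.
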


\begin{proofL}
We show such a $\pi_2$. The first steps is to place the same white pebble weight on the same nodes as initially in $\pi_1$. We then could follow the sub-pebbling $\pi_1$. Since we have less pebble weight before we add the white pebble weight, $w_{\pi_2}(t) \leq P$ for all times t.
\end{proofL}\\

This lemma indicates that initial white pebble weight is not helpful.\\

In all pebbling games we allow for a {\bf black sliding} move. This is pebble move (ii) in all games. Rule (ii) is sometimes alternatively written as follows :

\noindent
(ii) For any node i, if each child of i has pebble value 1, increase b(i) arbitrarily.

This would be the case if we did not allow for black sliding moves. This decouples increasing pebble weight and removing pebble weight from the children.

\begin{obs}
A pebbling with {\bf black sliding} moves can be converted to a pebbling without black sliding moves which requires at most 1 more pebble weight.
\end{obs}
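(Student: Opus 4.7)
The plan is to prove the observation by an explicit simulation: for each black sliding move in the original pebbling, I will insert one extra intermediate configuration into the new pebbling, then bound the weight of that intermediate configuration by $W+1$, where $W$ is the maximum pebble weight used in the original pebbling.

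Concretely, fix a pebbling $\pi$ and consider a black sliding move $m_t$ applied at node $i$ that takes $c_{t-1}$ to $c_t$. This move simultaneously (a) sets $w(i)$ to $0$ and raises $b(i)$ from $b_{old}(i)$ to $b_{new}(i)$, and (b) decreases the black pebble weights on the children of $i$ by some total amount $X \geq 0$. To remove the sliding I will replace $m_t$ by two moves: first an application of the non-sliding version of rule (ii) at $i$, which zeros $w(i)$ and raises $b(i)$ to $b_{new}(i)$ while leaving the children untouched; then a sequence of rule (i) moves that decrease each child's black weight to the value it has in $c_t$. Call the configuration between these two moves $c_{t-1/2}$. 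Doing this for every sliding move produces the claimed sliding-free pebbling $\pi'$.

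For the bound, note that every configuration of $\pi'$ that is not one of the inserted $c_{t-1/2}$'s coincides with a configuration of $\pi$, and so has weight at most $W$. For an inserted $c_{t-1/2}$, the only node whose weight differs from $c_{t-1}$ is $i$, whose weight changes by $b_{new}(i) - b_{old}(i) - w_{old}(i)$. Since configurations always satisfy $b(i)+w(i) \leq 1$, we have $b_{new}(i) \leq 1$, hence
\[
b_{new}(i) - b_{old}(i) - w_{old}(i) \;\leq\; 1 - b_{old}(i) - w_{old}(i) \;\leq\; 1.
\]
Therefore $w_{\pi'}(t-1/2) \leq w_{\pi}(t-1) + 1 \leq W+1$. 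Combined with the previous sentence this gives $\max_t w_{\pi'}(t) \leq W+1$, as required.

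The argument is essentially routine; the only conceptual point worth being careful about is the order in which the slide is split. One must do the "grow $b(i)$" step first so that rule (ii) remains applicable (the children still have pebble weight $1$), rather than first shrinking the children. The bound of $+1$ then follows immediately from the universal constraint $b(i)+w(i)\leq 1$ at node $i$, and no degree-dependent penalty appears because the extra weight is charged to $i$ rather than to the children that are shed afterwards.
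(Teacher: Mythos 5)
Your proof is correct and is essentially the paper's own argument, which simply replaces each black sliding move by two consecutive moves (first raise $b(i)$ via the non-sliding rule (ii), then shed the children's weight) so that only the inserted intermediate configuration can exceed the original weight, and by at most the parent's gain of $1$. Your write-up just makes explicit the bound and the ordering point that the paper leaves implicit.
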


This is simply the result of changing a black sliding move to two subsequent moves. We allow {\bf sliding} moves in our proofs.

\begin{definition}
We let $T^h_d$ represent the balanced $d$-ary tree of height $h$.
\end{definition}

\newpage

\section{Black Pebbling Game} 

\subsection{Black Pebbling Game Upper Bounds} 

We prove the following theorem which shows an upper bound for the $black$ $pebbling$ $game$ defined in Section 2. Similar results can be found in \cite{c:pebjournal}.

\begin{theorem}
Let $min_h = h$.
There exists a $black$ $pebbling$ $game$ {\bf root-pebbling} $\pi$ of $T^h_2$, $h \geq 2$, such
that for all times t, $w_\pi(t) \le min_h$.
\end{theorem}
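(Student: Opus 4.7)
The plan is to prove this by induction on the height $h$, building a root-pebbling of $T^h_2$ recursively from root-pebblings of its two principal subtrees, each of which is a copy of $T^{h-1}_2$.

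For the base case $h=2$, the tree consists of the root and two leaves. I would exhibit the explicit pebbling: place a black pebble on the left leaf (weight $1$), place a black pebble on the right leaf (weight $2$), then apply a black sliding move to place a pebble on the root while simultaneously removing the pebbles from both children (weight $1$), and finally remove the pebble from the root (weight $0$). The maximum weight attained is $2 = \min_2$.

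For the inductive step, assume a root-pebbling $\pi_{h-1}$ of $T^{h-1}_2$ with maximum weight at most $h-1$ exists. To pebble $T^h_2$, I would proceed in three phases. First, execute $\pi_{h-1}$ on the left principal subtree, but halt it at the moment the pebble reaches the left subtree's root rather than removing that pebble; the weight during this phase never exceeds $h-1$. Second, with that one pebble sitting on the left subtree's root, execute $\pi_{h-1}$ on the right principal subtree, again halting when the right subtree's root is pebbled; the weight during this phase is at most $(h-1) + 1 = h$, since the lone pebble on the left subtree's root is an additive overhead on top of the recursive pebbling. Third, apply a black sliding move to pebble the root of $T^h_2$, simultaneously removing the pebbles on the two subtree roots; this leaves weight $1$ on the root. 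Finally, remove the root pebble so the final configuration is empty, as required of a root-pebbling.

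The peak weight across the whole construction is $h$, attained during the second phase. The one step that requires mild care is verifying that the recursive sub-pebblings can indeed be halted in a valid sub-pebbling-style state (a pebble resting on the subtree's root with no white pebbles present), which is true because $\pi_{h-1}$ is a root-pebbling in the purely black game and therefore has a time at which the subtree's root holds a pebble. Everything else is routine bookkeeping on the peak weight.
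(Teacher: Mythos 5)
Your proposal is correct and follows essentially the same inductive construction as the paper: pebble one principal subtree, keep a single black pebble on its root, pebble the other subtree with an additive overhead of one, then slide to the root of $T^h_2$. The only detail worth making explicit (as the paper does) is that after halting the first sub-pebbling you should remove any black pebbles remaining below that subtree's root, so the carried-over weight in the second phase is exactly one; such removals are always legal and only decrease the weight, so the bound $w_\pi(t) \le h$ stands.
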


\noindent
To show this we use induction.\\



\noindent
{\bf Base Case :} h = 2.

There are 2 children of the root. We place a black pebble weight on each leaf and slide a black pebble weight to the root. Thus, $sw_\pi \leq 2$ at this time and all previous times. Thus the IH is satisfied in the base case.\\





\noindent
{\bf Induction step :} We prove for $h+1$ assuming for $h'$,
$3\le h' \le h$.\\
Note $min_{h+1} = min_h + 1$.\\

There are two subtrees of the root. Using $min_h$ pebble weight we pebble the first subtree root using the pebbling in the IH for height $h$. We then remove black pebble weight that is not on the root of the subtree such that we have only this 1 pebble weight. 

We next use $min_h$ pebble weight to pebble the second subtree root using the pebbling in the IH for height h. At this time we maintain one pebble weight in the first subtree. We thus use $sw_\pi \leq min_h + 1$.

We now have a pebble on each subtree root and slide a pebble to the root. Thus, $sw_\pi \leq min_h + 1 = min_{h+1}$ at all times.

Thus, the IH is satisfied.\\

To show this for $d$-ary balanced trees we would iteratively pebble the children of the root using the pebbling in the IH. Each time leaving a pebble. This would result in an upper bound of $(d-1) * (h-1) + 1$.

The key insight is that we had to leave some pebble weight in one subtree while we proceeded with the pebbling in another subtree. This idea is important to all subsequent proofs.\\\\

\subsection{Black Pebbling Game Lower Bounds}

We prove the following theorem which shows a lower bound for the $black$ $pebbling$ $game$ defined in Section 2. Combined with the previous section we have a tight bound on the number of pebbles taken to complete the $black$ $pebbling$ $game$ for balanced trees of degree 2. Similar results have been shown in \cite{c:pebjournal}.\\

\begin{theorem}\label{bplb}
Let $min_h = h$.
For every $black$ $pebbling$ $game$ {\bf root-pebbling} $\pi$ of $T^h_2$, $h \geq 2$, there is a time $t$ such
that $w_\pi(t) \ge min_h$.
\end{theorem}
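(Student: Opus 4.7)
The plan is induction on $h$. The base case $h=2$ is immediate: in $T^2_2$ the root has two leaf children, and the move that first places a pebble on the root (move (ii)) requires both children to already be pebbled, so the configuration just before that move already carries weight two, matching $min_2 = 2$.

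For the inductive step, assume the bound for $T^h_2$ and consider an arbitrary root-pebbling $\pi$ of $T^{h+1}_2$. Let $t^*$ be the first time the root of $T^{h+1}_2$ carries a pebble, and let $L, R$ denote its two principal subtrees (each a copy of $T^h_2$). Because both children of the root carry pebbles at $c_{t^*-1}$, both $L$ and $R$ are non-empty there. Define $t_L$ to be the largest $t \le t^*-1$ at which $L$ is entirely empty, and $t_R$ analogously for $R$; both exist (and satisfy $t_L, t_R < t^*-1$) since $c_0$ has no pebbles. Without loss of generality assume $t_L \ge t_R$; by maximality of $t_R$ the subtree $R$ is non-empty at every configuration $c_t$ with $t > t_R$, and in particular at every $t > t_L$.

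I would then apply the induction hypothesis to the restriction of $\pi$ to $L$ on the interval $[t_L, t^*-1]$, extended if necessary by trailing removal moves into a genuine root-pebbling of $T^h_2$ (which does not raise the maximum pebble weight). The restricted sequence is a valid black pebbling of $T^h_2$, since the preconditions of moves inside $L$ depend only on $L$; it starts empty at $c_{t_L}$ and it reaches a pebbled $L$-root at $t^*-1$. The IH therefore yields a time $t' \in [t_L, t^*-1]$ at which $L$ carries at least $h$ pebbles. Since $L$ is empty at $c_{t_L}$ we have $t' > t_L$, and so $R$ contributes at least one additional pebble at $c_{t'}$, giving $w_\pi(t') \ge h+1 = min_{h+1}$. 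The main obstacle the argument has to navigate is the coordination between the two subtrees: a naive induction would fail because the ``heavy'' time for $L$ supplied by the IH need not be a time at which $R$ is non-empty, and the WLOG choice $t_L \ge t_R$ together with the strict inequality $t' > t_L$ (forced by $L$'s emptiness at $t_L$) is what forces $t' > t_R$ and hence a pebble in $R$.
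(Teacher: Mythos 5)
Your proof is correct and takes essentially the same route as the paper's: induction on $h$, decomposition into the two principal subtrees of $T^{h+1}_2$, and combining a heavy time of weight $min_h$ supplied by the induction hypothesis inside one subtree with at least one pebble necessarily present in the other subtree at that same time. The only difference is bookkeeping --- the paper picks the last time some subtree reaches weight $min_h$ and derives a contradiction if the other subtree were empty then, whereas you order the subtrees by their last empty times before the root move and restrict $\pi$ to the later one --- but both arguments hinge on the same coordination idea and yield $min_h+1$.
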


\noindent
To show this we use induction.\\



\noindent
{\bf Base Case :} h = 2.

There are 2 children of the root. To place a pebble on the root we must pebble these 2 nodes. The IH is then satisfied in the base case.\\

\noindent
{\bf Induction step :} We prove for $h+1$ assuming for $h'$,
$3\le h' \le h$.\\
Note $min_{h+1}$ = $min_h + 1$.\\

There are 2 subtrees of the root. There must be a time before we pebble the root that we have a pebble on each subtree root if we are to place a pebble on the root. Thus, by IH, there must be a last time we use pebble weight $min_h$ in one of the subtrees. Let this time be $t_{last}$. 

At $t_{last}$, suppose for contradiction we did not have one pebble in the other subtree.  Having less than one black pebble on any node does not allow us to apply any of the pebbling rules and is thus equivalent to having no pebble weight.

To pebble the root we must have a pebble on each of the subtree roots. Thus if we had less than one pebble in any subtree we must place a pebble on the root of that subtree before we pebble the root. To do this we require $min_h$ pebble weight by IH. This would contradict $t_{last}$ being the last time we use pebble weight $min_h$.

Thus we maintain at least one pebble in the other subtree at $t_{last}$ and $w_\pi(t_{last}) \geq min_h + 1 = min_{h + 1}$ as required.

Thus, the IH is satisfied.\\

To show this for $d$-ary balanced trees we would look at the last time we use $min_h$ in any tree and argue that we need 1 pebble in each other subtree at this time. 

This would result in an lower bound of $(d-1) * (h-1) + 1$. The proofs in this section result in a tight lower bound for the $black$ $pebbling$ $game$ on balanced binary trees. We will show a tight lower bound for the $fractional $ $pebbling$ $game$.

\newpage

\section{Whole Black-White Pebbling Game} 

\subsection{Whole Black-White Pebbling Game Upper Bounds} 

We prove the following theorem which shows an upper bound for the whole $black$-$white$ $pebbling$ $game$ defined in Section 2. Similar results can be found in \cite{c:pebjournal}.

\begin{theorem}
Let  $min_h = \lceil h/2\rceil + 1$.
There exists a whole $black$-$white$ $pebbling$ $game$ {\bf root-pebbling} $\pi$ of $T^h_2$, $h \geq 2$, such
that for all times t, $w_\pi(t) \le min_h$.
\end{theorem}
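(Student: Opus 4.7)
The plan is to prove the theorem by induction on $h$, strengthening the inductive hypothesis so that it simultaneously handles two tasks: placing a black pebble on a subtree root, and cleaning away a lingering white pebble that we deposit elsewhere. I would prove by simultaneous induction the following two statements: \textbf{(Place)} there is a sub-pebbling of $T^h_2$ from the empty configuration to one with a single black pebble on the root, of peak pebble weight at most $\lceil h/2 \rceil + 1$; and \textbf{(Clean)} there is a sub-pebbling of $T^h_2$ from a single white pebble on the root to the empty configuration, of peak pebble weight at most $\lfloor h/2 \rfloor + 2$. The theorem follows from \textbf{(Place)} by appending a final move that removes the root's black pebble, which cannot raise the peak.

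For the inductive step of \textbf{(Place)} on $T^h_2$, let $r$ denote the root, $c_1, c_2$ its children, and $g_1, g_2, g_3, g_4$ its grandchildren, with $g_3, g_4$ beneath $c_2$. The phases are: (1) pebble $c_1$ to a single black pebble by \textbf{(Place)} on $T^{h-1}_2$; (2) use \textbf{(Place)} on $T^{h-2}_2$ rooted at $g_4$ to place a black pebble on $g_4$, carrying the pebble on $c_1$ as dead weight, yielding a peak of $1 + (\lceil (h-2)/2 \rceil + 1) = \lceil h/2 \rceil + 1$; (3) add a white pebble on $g_3$; (4) apply rule (ii) at $c_2$, which is now legal because both $g_3, g_4$ have pebble weight $1$, sliding the black off $g_4$; (5) apply rule (ii) at $r$ to place a black on $r$ while sliding the blacks off $c_1, c_2$, momentarily achieving $rw_\pi = 1$; (6) remove the black from $r$; (7) invoke \textbf{(Clean)} on $T^{h-2}_2$ rooted at $g_3$ to eliminate the lingering white. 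A simpler parallel argument (pebble $c_1$ to black via \textbf{(Place)} on $T^{h-1}_2$, add a white on $c_2$, apply rule (v) at $r$, remove the black from $c_1$, then \textbf{(Clean)} $c_2$ via \textbf{(Clean)} on $T^{h-1}_2$) establishes the \textbf{(Clean)} inductive step. Base cases at $h = 2$ and $h = 3$ are handled by direct strategies using at most three pebbles.

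The main obstacle will be the peak bookkeeping across these phases and verifying it fits within $\lceil h/2 \rceil + 1$ in both parities of $h$. Phase (2) is the tightest and uses the identity $\lceil (h-2)/2 \rceil + 2 = \lceil h/2 \rceil + 1$, which holds in both parities; phase (7) uses $\lfloor (h-2)/2 \rfloor + 2 \le \lceil h/2 \rceil + 1$, tight for even $h$ and slack for odd $h$. A less obvious design point is why we strengthen the IH to carry \textbf{(Clean)} at all: the tight upper bound cannot be achieved with only black pebbles (by Section 3 the all-black strategy already requires $h$ pebbles), so some white pebble must be deposited and later verified, and \textbf{(Clean)} is exactly the auxiliary that lets us do this verification within the allotted peak.
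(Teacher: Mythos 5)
There is a genuine gap: your strengthened statement \textbf{(Place)} --- a sub-pebbling from the empty configuration to exactly one black pebble on the root with peak at most $\lceil h/2\rceil+1$ --- is false already for $h=4$, where it asserts a ``clean finish'' with peak $3$. To see this, suppose such a pebbling of $T^4_2$ exists and let $t^*$ be a time when the configuration is exactly $\{b(r)=1\}$. The black on $r$ was last placed at some $t_1\le t^*$ with both children weight $1$ just before $t_1$; neither child can be white then, since a white child would have to be removed during $(t_1,t^*)$ while $b(r)=1$, costing $1+1+2=4$. So both children carry blacks, placed at times $t_{c_1}<t_{c_2}<t_1$ and persisting to $t_1$. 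Just before $t_{c_2}$ we have $b(c_1)$ plus both children of $c_2$ pebbled, i.e.\ weight exactly $3$ and nothing else; those two grandchildren must be black (a white there would later be removed while $b(c_2)$ or $b(r)$ is present, again costing $4$). Their blacks were placed at $t_{g_3}<t_{g_4}<t_{c_2}$; just before $t_{g_4}$ the weight is already $3$ (two leaves plus $b(g_3)$), so $b(c_1)$ cannot yet be present, forcing $t_{g_4}<t_{c_1}$. But then just before $t_{c_1}$ both $b(g_3),b(g_4)$ persist and both children of $c_1$ must be pebbled: weight at least $4$, a contradiction. Since your recursion consumes \textbf{(Place)} in phases (1) and (2) (the pebble on $c_1$ is carried as dead weight of exactly $1$, with no pending whites in its subtree), the whole induction collapses; \textbf{(Clean)} inherits the same problem because its step also invokes \textbf{(Place)} on $T^{h-1}_2$.

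The construction you describe is also internally inconsistent with \textbf{(Place)}: phases (1)--(7) end with the \emph{empty} configuration (the root black is removed in phase (6) before the white at $g_3$ is cleaned), not with a single black on the root; and if you instead keep the root black through phase (7), the peak there becomes $1+\lfloor (h-2)/2\rfloor+2=\lfloor h/2\rfloor+2$, which exceeds $\lceil h/2\rceil+1$ for even $h$. This is exactly the point where the paper's induction hypothesis is shaped differently: it does not demand a clean finish, but instead guarantees that at the time $\troot$ the root is black-pebbled there may remain up to $min_h-2$ white weight in the tree (condition (2)), together with a separate promise that this residual white weight can be removed afterwards using at most $min_h$ (condition (3)). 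Carrying this quantitative ``dirty finish plus deferred clean-up'' through the recursion (the paper steps two levels at a time with the $v_1,v_2,v_3,v_4$ decomposition) is what makes the $\lceil h/2\rceil+1$ bound attainable; your bound-$\lceil h/2\rceil+1$ clean-finish auxiliary cannot be repaired, so the fix is to weaken \textbf{(Place)} along these lines rather than adjust the bookkeeping.
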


\noindent
To show this we use induction. We show this only for the even height cases and it follows for the odd height cases since we can extract a pebbling for an odd height from the larger even height pebbling.\\

\noindent
{\bf Induction Hypothesis [IH$(h)$]:} 

\noindent
Let $min_h = h/2 + 1$.\\
For even $h\geq 2$
there exist a $whole$ $black$-$white$ $pebbling$ $game$ {\bf root-pebbling} $\pi$ of $T_2^h$ and a time $\troot$ such that $sw_\pi \leq min_h$ at all times. Additionally,

(1) $b.rw_\pi(\troot) = 1$

(2) $w.w_\pi(\troot) \leq min_h-2$

(3) White pebble weight at $\troot$ can be removed using $w_\pi(t) \leq min_h$ for  $t > \troot$\\

Condition (1) specifies that the root weight at $\troot$ is black. Condition (2) specifies that there is not too much white pebble weight at $\troot$.\\

\noindent
{\bf Base Case :} h = 2.

There are 2 children of the root. We use 2 pebble weight on the leaves and slide it to the root. Thus, $sw_\pi \leq 2$ at this time and all previous times. Condition (2) and (3) are satisfied since we have no white pebble weight. Thus the IH is satisfied in the base case.\\

\noindent
{\bf Induction step :} We prove the induction hypothesis for $h+2$ assuming it for $h'$,
$2\le h' \le h$.\\
Note $min_{h+2}$ = $min_h + 1$.\\

We let the children of the root be $p_2$ and $p_3$. We call the children of these $v_1$, $v_2$, $v_3$ and $v_4$ as in the following figure.\\ 

\begin{figure}[H]
  \centering

\begin{tikzpicture}\label{top3graph}

\draw [-] (-3,0.1) -- (10,0)[color=white] node [below] {};
\draw [-] (1,0.1) -- (2,0.5) node [below] {};
\draw [-] (2.75,0.1) -- (2.25,0.5) node [below] {};
\draw [-] (4.25,0.1) -- (4.75,0.5) node [below] {};
\draw [-] (6,0.1) -- (5,0.5) node [below] {};
\draw [-] (2.1, 0.9) -- (3.2,1.3) node [below] {};
\draw [-] (4.85, 0.9) -- (3.85,1.3) node [below] {};

\draw [-] (1,-0.3) -- (1.25,-0.5) node [below] {};
\draw [-] (2.75,-0.3) -- (3,-0.5) node [below] {};
\draw [-] (4.25,-0.3) -- (4.5,-0.5) node [below] {};
\draw [-] (6,-0.3) -- (6.25,-0.5) node [below] {};
\draw [-] (1,-0.3) -- (0.75,-0.5) node [below] {};
\draw [-] (2.75,-0.3) -- (2.5,-0.5) node [below] {};
\draw [-] (4.25,-0.3) -- (4,-0.5) node [below] {};
\draw [-] (6,-0.3) -- (5.75,-0.5) node [below] {};

    \node[text width=4 cm,text ragged, text height = 5, anchor=west]  at (3.05,1.55) {root};
    \node[text width=4 cm,text ragged, text height = 5, anchor=west]  at (1.85,0.8) {$p_2$};
    \node[text width=4 cm,text ragged, text height = 5, anchor=west]  at (4.6,0.8) {$p_3$};
    
    \node[text width=4 cm,text ragged, text height = 5, anchor=west]  at (0.7,-0.06) {$v_1$};
    \node[text width=4 cm,text ragged, text height = 5, anchor=west]  at (2.45,-0.06) {$v_2$};
    \node[text width=4 cm,text ragged, text height = 5, anchor=west]  at (3.95,-0.06) {$v_3$};
    \node[text width=4 cm,text ragged, text height = 5, anchor=west]  at (5.7,-0.06) {$v_4$};
\end{tikzpicture}

\caption{Our labeling of the nodes of $T_2^h$.}
\end{figure}
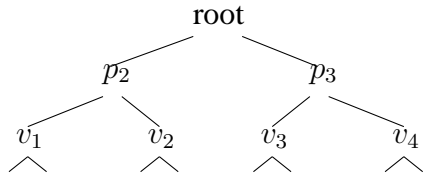

We simulate the pebbling in the IH for height $h$ in the subtree rooted at $v_1$. We modify the pebbling to leave the pebble on the root. This requires at most $min_h+1$ pebble weight.

We then simulate the pebbling in the subtree rooted at $v_2$. We interrupt the pebbling when $v_2$ is pebbled. We use $sw_\pi \leq min_h + 1$ at all times before this point.

We remove all other black pebble weight in $v_2$ such that we have $min_h - 2$ white pebble weight in the subtree rooted at $v_2$ by condition (2) and an additional pebble on $v_2$. At this time we maintain one pebble weight on $v_1$. We then have $min_h$ pebble weight in the tree.

We then slide a pebble to $p_2$. We then place a white pebble on $p_3$. We may then slide a pebble to the root. At this point we have 1 pebble on the root, 1 pebble on $p_3$ and $min_h - 2$ in the subtree rooted at $v_2$. By sliding the pebble to the root we satisfy condition (1). 

We then remove all black pebble weight and have white pebble weight $min_h - 1$, satisfying (2). We have yet to exceed $min_{h+2}$. We only have white pebble weight present at $\troot$ thus removing it will show (3).\\

We remove the $min_h - 2$ white pebble weight in the subtree rooted at $v_2$. This takes $min_h$ by condition (3) of the IH. The only other pebble weight is on $p_3$. Thus condition (3) has yet to be violated and we still have not exceeded $min_{h+2}$.

We simulate the pebbling in the subtree rooted at $v_3$ and interrupt it when there is a pebble on $v_3$. We remove all black pebble weight other than on the $v_3$. At this point there is 1 pebble on $p_3$, 1 pebble on $v_3$, and $min_h - 2$ white pebbles in the subtree rooted at $v_3$. We then place a white pebble on $v_4$. Thus we have yet to exceed $min_{h+2}$.

We remove the pebble on $p_3$ and the black pebble on $v_3$. We then remove the white pebble weight in the subtree rooted at $v_3$ using (3) from the IH.

To remove the white pebble on $v_4$ we simulate the pebbling for $h$ but remove the white pebble instead of placing a black pebble. We remove the resulting white pebble weight and the pebbling is complete. At no point in removing the white pebble weight that was present at $\troot$ have we used more that $min_h+1$, thus condition (3) and the IH are satisfied.\\\\

This shows the power of white pebbles. We next show that the upper bound for fractional pebbling can be obtained using only half pebbles. However, in Section 5 we show that fractional pebbles allow for a multitude of pebbling strategies. \\\\

\subsection{Whole Black-White Pebbling Game Lower Bounds} 

We prove the following theorem which shows a lower bound for the whole $black$-$white$ $pebbling$ $game$ defined in Section 2. Combined with the previous section we have a tight bound on the number of pebbles taken to complete the whole $black$-$white$ $pebbling$ $game$ for balanced trees of degree 2. Similar results have been shown in \cite{c:pebjournal}.\\

\begin{theorem}
Let $min_h = \lceil h/2 \rceil + 1$.
For every whole $black$-$white$ $pebbling$ $game$ {\bf root-pebbling} $\pi$ of $T^h_2$, $h \geq 2$, there is a time $t$ such
that $w_\pi(t) \ge min_h$.
\end{theorem}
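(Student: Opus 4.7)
The plan is to prove the theorem by induction on $h$, paralleling the upper-bound argument of the previous subsection. The structural observation used at every step is that rule (ii) must fire at the root at some time $t^{**}$: the root is empty at both the start and the end of $\pi$ but carries weight $1$ in between, and whether its weight is attained via a black placement (itself a rule (ii) application) or via a white placement (rule (iii)) whose later clearance requires rule (ii), rule (ii) is invoked at the root at some $t^{**}$ and therefore both children of the root simultaneously carry weight $1$ at $t^{**}$.

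The base case $h = 2$ is immediate: at $t^{**}$ the two leaves each carry weight $1$, yielding $w_\pi(t^{**}) \geq 2 = min_2$. For the inductive step from $h$ to $h+1$, let $p_2, p_3$ denote the two children of the root of $T^{h+1}_2$. Restricting $\pi$ to each subtree $T_{p_i}$, and invoking the lemma that initial white pebble weight is not helpful, yields a root sub-pebbling of $T^h_2$, so the IH furnishes a time $\tau_{p_i}$ at which the weight in $T_{p_i}$ is at least $min_h$. When $h$ is odd, $min_{h+1} = min_h$ and this subtree bound already settles the theorem.

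The hard case is $h$ even, where $min_{h+1} = min_h + 1$ and we must locate a time at which one subtree has weight $\geq min_h$ while the complementary portion of the tree simultaneously carries weight $\geq 1$. The intended approach is to pick $\tau_{p_2}$ as the first time at which $T_{p_2}$'s weight reaches $min_h$: if either the root or $T_{p_3}$ carries any weight at $\tau_{p_2}$, we are done. Otherwise $T_{p_3}$ is empty at $\tau_{p_2}$, and a suitable subinterval of $\pi$ restricted to $T_{p_3}$ (chosen to contain $t^{**}$) is itself a root sub-pebbling of $T^h_2$; applying the IH again yields a second witness time $\tau_{p_3}$ with weight $\geq min_h$ in $T_{p_3}$. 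A case analysis on whether $\tau_{p_3}$ precedes or follows $t^{**}$ then forces residual weight in $T_{p_2}$ or on the root to supply the needed extra unit. The main obstacle is precisely this bookkeeping: ruling out a completely sequential pebbling of the two subtrees rests essentially on the simultaneous-weight-$1$ requirement at $t^{**}$, and making the resulting contradiction fully rigorous may call for mildly strengthening the IH --- for example, arranging that the witness time $\tau_{p_i}$ can be chosen to lie within an interval during which $p_i$ itself carries weight $1$, so that the forced overlap at $t^{**}$ is directly exploitable.
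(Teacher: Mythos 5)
Your route differs from the paper's: the paper proves the lower bound with two base cases ($h=2$ and $h=3$) and an induction step of size two, analyzing the four grandchild subtrees rooted at $v_1,\dots,v_4$ and counting how many of their ``heavy'' times (times of weight $\geq min_h$) can occur before the time $\troots$ at which both children of the root are simultaneously pebbled. You step by one and analyze only the two principal subtrees; your parity observation (the odd-to-even step is free since $min_{h+1}=min_h$ there) is correct, but the even-to-odd step is then the entire content of the theorem, and there your argument has a genuine gap rather than mere bookkeeping.

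The gap is this: the plain IH (``every root-pebbling of $T^h_2$ has some time of weight $\geq min_h$'') forces neither residual weight elsewhere at a subtree's heavy time, nor a heavy time of $T_{p_2}$ before the moment a black pebble is delivered to $p_2$. The natural move --- truncate the restriction of $\pi$ to $T_{p_2}$ at the time $p_2$ receives its black pebble and apply the IH to that prefix --- is invalid, because the prefix need not end empty: a cheap strategy that delivers a black pebble to $p_2$ can leave outstanding white pebbles in $T_{p_2}$ whose clearance (and hence, possibly, the heavy time) occurs only after $\troots$. Symmetrically, if $p_3$ carries a white pebble at $\troots$, the heavy time of $T_{p_3}$ may lie entirely after $\troots$ with the other subtree and the root empty at that moment. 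So in your ``otherwise'' branch both candidate witness times $\tau_{p_2}$ and $\tau_{p_3}$ can plausibly carry total weight exactly $min_h$, and the case analysis on their position relative to $t^{**}$ does not by itself produce the extra unit. Closing this requires a genuinely stronger IH that tracks the white ``debt'' left behind when a subtree root is pebbled cheaply --- essentially the kind of strengthening the paper only develops for the fractional game in Section 5.3, where the conclusion includes persistent white subtree weight on $[\troots,\tbs]$ --- and note that your specific suggested strengthening (a witness time lying inside an interval during which $p_i$ carries weight $1$) is false as stated already for $h=2$, where the heavy configuration consists of the two pebbled leaves with the subtree root still empty. The paper's step-of-two decomposition into four grandchild subtrees is precisely what lets it reach a contradiction with only the plain IH.
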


\noindent
We show this by induction : \\

\noindent
{\bf Base Case :} h = 2

We must show that for $h=2$, if $\pi$ is a $whole$ $black$-$white$ $pebbling$ $game$ {\bf root-pebbling} of $T_2^2$, then there is a time t such that $sw_\pi(t) \geq 2$. This is trivially true.\\

\noindent
{\bf Base Case :} h = 3

We need to show that if $\pi$ is a $whole$ $black$-$white$ $pebbling$ $game$ {\bf root-pebbling} of $T_2^3$, then there is a time t such that $sw_\pi(t) \geq 3$.
 
If we ever use a white pebble we must use at least 3 pebbles at the time before we remove it. Thus we may not use white pebbles if we wish to use less than pebble weight 3.

Then, if we used less than 3 pebble weight, we would contradict Theorem \ref{bplb}.\\

\noindent
{\bf Induction step :}  Assuming the theorem is true for  $h'$,
$2 \le h' \le h$, it is sufficient to prove the following.

\begin{lemma} \label{bwlb3}
For $h\geq2$, if $\pi$ is a $whole$ $black$-$white$ $pebbling$ $game$ {\bf root-pebbling} of $T_2^{h+2}$ then there is a time t such that $sw_\pi(t) \geq min_{h+2}$.
\end{lemma}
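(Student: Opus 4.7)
I would argue by contradiction. Suppose that $sw_\pi(t) \le min_h$ (so $< min_{h+2}$) at every time $t$ of the pebbling $\pi$ of $T_2^{h+2}$. The plan is to mimic the last-time argument used in Theorem~\ref{bplb}, enhanced to account for white pebbles and to invoke the IH on the four height-$h$ subtrees $T_{v_1},\ldots,T_{v_4}$.

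First I would establish structural facts. Because $\pi$ is a root-pebbling, the root must reach value $1$, and this can only happen via rule (ii) --- either to place a black pebble on the root, or to remove a previously placed white pebble. Both force $p_2$ and $p_3$ to have pebble value $1$ at that moment. So there exists a time $t^*$ with $p_2, p_3$ both at value $1$. Iterating the same argument one level down, each of $v_1,\ldots,v_4$ must reach value $1$ at some time. The restriction of $\pi$ to each subtree $T_{v_i}$ is therefore a root-pebbling of $T_2^h$ (it starts and ends at weight $0$ and attains $v_i=1$), and the IH gives that the weight of $T_{v_i}$ reaches at least $min_h$ at some time.

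Next I define $t_{last}$ to be the latest time at which some $T_{v_i}$ has weight $\ge min_h$; WLOG say this holds for $T_{v_1}$. Under the contradiction assumption, the entire $min_h$ budget is absorbed by $T_{v_1}$ at $t_{last}$, so $p_2 = p_3 = 0$ and every other $T_{v_j}$ is empty there. If some $v_j$ with $j \ne 1$ reaches value $1$ after $t_{last}$, the tail of $\pi$ restricted to $T_{v_j}$ is a root-pebbling of $T_2^h$ starting and ending at $0$, so by the IH its weight exceeds $min_h$ after $t_{last}$ --- contradicting the choice of $t_{last}$. So no $v_j$ with $j \ne 1$ is pebbled after $t_{last}$, and the same rule-(ii)-lifting argument then forces every $p_2 = 1$ and $p_3 = 1$ event (and in particular $t^*$) to lie strictly before $t_{last}$.

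The remaining case is the main obstacle. I would zoom into the pre-$t_{last}$ segment and pick the first moment $t_{12}$ at which $v_1$ and $v_2$ are simultaneously at value $1$ (a moment required either to place black on $p_2$ via rule (ii) or to remove a white pebble from $p_2$). Considering the session of $v_1 = 1$ containing $t_{12}$, I would argue a dichotomy: either the pre-existing weight on $T_{v_2}$ at the start of this session already combines with $v_1 = 1$ to push $sw$ past $min_h$, or $T_{v_2}$ is pebbled from $0$ inside this session, making the relevant sub-segment a root sub-pebbling of $T_2^h$ to which the IH (extended by a phantom removal of $v_2$'s pebble) applies --- yielding $T_{v_2}$ weight $\ge min_h$ while $v_1 = 1$ persists, and hence $sw \ge min_h + 1$. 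The hardest technical point is making the second branch of this dichotomy airtight: carefully showing that the session of $v_1 = 1$ really contains a fresh pebbling of $T_{v_2}$ from $0$, and otherwise iterating the argument (or running its symmetric version on the $v_2$-session, or applying it on $v_3, v_4$ via $p_3$) until a contradiction is forced. This is exactly where the shifting trade-off between ``before the root pebble'' and ``after the root pebble'' promised in the introduction must be used.
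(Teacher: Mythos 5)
Your opening is sound and tracks the paper's strategy closely: restricting $\pi$ to the four height-$h$ subtrees rooted at $v_1,\ldots,v_4$ and invoking the IH to get a ``heavy'' time ($\geq min_h$) in each, noting that under the contradiction hypothesis any heavy time empties $p_2,p_3$ and the other three subtrees, and using the last heavy time to force every moment with $p_2=p_3=1$ to occur earlier. But the actual contradiction --- the heart of the lemma --- is never derived. Your final paragraph is an admitted sketch, and the dichotomy it rests on does not hold up as stated. In the first branch, ``pre-existing weight on $T_{v_2}$'' at the start of the $v_1$-session can be a single white pebble, which combined with $v_1=1$ gives nothing near $min_h+1$. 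In the second branch, the IH applies only to a root-pebbling of $T_2^h$ (a pebbling that starts and ends with weight $0$ and puts weight $1$ on the subtree root); a sub-segment of $\pi$ restricted to $T_{v_2}$ that ends while $T_{v_2}$ still carries white pebbles is not such an object, and the ``phantom removal'' patch does not repair this, since white pebbles cannot simply be deleted. Even taking the full tail from an empty moment of $T_{v_2}$, the heavy time the IH guarantees need not fall inside the window where $v_1=1$: with white pebbles on $v_2$ or $p_2$, the expensive work in $T_{v_2}$ can be deferred until after $v_1$'s pebble is gone, so the claimed $sw_\pi \geq min_h+1$ does not follow, and ``iterating until a contradiction is forced'' is exactly the step that needs a proof.

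What the paper does at this point is a global counting argument rather than a local session analysis. It fixes a time $\troots$ with both children of the root pebbled, notes each of the four subtrees needs a heavy time, and observes that since a heavy time empties everything else, having two or more subtrees heavy only before $\troots$ forces those subtrees to be re-pebbled (hence heavy again) later; this pushes at least three of the four required heavy times after $\troots$. But at the first heavy time after $\troots$ the rest of the tree is empty, so no further heavy times in other subtrees are needed --- contradicting that at least three are. That quantitative bookkeeping over all four subtrees relative to $\troots$ is the idea your proposal is missing; without it (or a worked-out substitute), the proof is incomplete.
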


\begin{proof} 

Note $min_{h+2} = min_h + 1$. For the sake of contradiction, suppose $sw_\pi(t) < min_{h} + 1$ or equivalently $sw_\pi(t) \leq min_{h}$ for all times t.
 
Since there is a time where the root is pebbled there must be a time where the children of the root are pebbled to add black pebble weight or to remove white pebble weight from the root. Let $\troots$ be a time such that $rw_\pi$($\troots$)=1 for both principal subtrees.
 
By the same logic we must pebble $v_1$, $v_2$, $v_3$ and $v_4$ (Figure 1). Thus, by the IH, it is the case that at some time we must use $min_h$ pebble weight in the subtrees rooted at these nodes. Note there may be more than one time fitting this description for each tree rooted at the $v_i$.

If two or more of these times occur before $\troots$ then at the last time there must be no pebble weight elsewhere in the tree. Thus we must again use $min_h$ in the subtrees that are not the last subtree.  Thus we will need to use $min_h$ in at least three subtrees after $\troots$ (subtrees rooted at $v_1$, $v_2$, $v_3$ or $v_4$).

When we use $min_h$ in the first such subtree after $\troots$ there can be no pebbles elsewhere. This would indicate we no longer need to reach such a time in any other subtrees. This is a contradiction since we need to use $min_h$ in at least three subtrees after $\troots$. Thus at some some time t, $sw_\pi(t) > min_{h}$ and $sw_\pi(t) \geq min_{h+2}$ as desired.\\
 \end{proof} 
 
 The previous proof is much simpler than the proof of the main theorem we will show later. This is due to the limited number of strategies possible when using whole pebbles.

\newpage

\section{Fractional Pebbling Game}

\subsection{Half Pebbling Game Upper Bounds} 

We prove the following theorem which shows an upper bound for the $half$ $pebbling$ $game$ defined in Section 2. Similar results can be found in \cite{c:pebjournal}.

\begin{theorem}
Let $min_h = h/2+1$.
There exists a $half$ $pebbling$ $game$ {\bf root-pebbling} $\pi$ of $T^h_2$, $h \geq 2$, such
that for all times t, $w_\pi(t) \le min_h$.
\end{theorem}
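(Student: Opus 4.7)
The plan is to prove the theorem by induction on $h$ in steps of $2$, following the structure of the whole black-white upper bound of Section 4.1 as closely as possible. The one substantively new ingredient is an $h = 3$ base case whose explicit construction exploits a partial slide to beat the whole black-white bound by $0.5$. I would adopt the same strengthened induction hypothesis as in Section 4.1: at some time $\troot$ the root carries a full black pebble ($b.rw_\pi(\troot) = 1$), the total white pebble weight satisfies $w.w_\pi(\troot) \le min_h - 2$, and this white can be removed later with $w_\pi(t) \le min_h$ for $t > \troot$. Two separate inductive chains suffice, one starting from $h = 2$ to handle even heights and one starting from $h = 3$ to handle odd heights.

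The $h = 2$ base case is inherited verbatim from the black pebbling game: place $b = 1$ on both leaves and slide to the root, peak $2$, no white. The $h = 3$ base case, where the half pebble earns its $0.5$ savings, admits the following explicit strategy. Place $b = 1$ on $l_1$ and $l_2$ (peak $2$); execute a partial slide on $p_2$ setting $b(p_2) = 0.5$ and zeroing both leaves; place $b = 1$ on $l_3$ and $l_4$ (peak $2.5$); perform a full slide setting $b(p_3) = 1$; apply rule (iii) to raise $w(p_2)$ to $0.5$ so that $p_2$ reaches value $1$ with mixed colours; slide to the root, zeroing $b(p_2)$ and $b(p_3)$ and leaving $w(p_2) = 0.5$ as the sole residual outside the root. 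Declaring this moment $\troot$, one has $b.rw_\pi(\troot) = 1$ and $w.w_\pi(\troot) = 0.5 = min_3 - 2$. Cleanup proceeds by removing $b(\mathrm{root})$, re-placing $b = 1$ on $l_1$ and $l_2$ (peak $2.5$), and applying move (ii) on $p_2$ to zero $w(p_2)$. The peak throughout is exactly $2.5 = min_3$.

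The inductive step $h \to h + 2$ is structurally identical to the whole black-white inductive step, invoking the half-pebbling IH($h$) in place of the whole black-white IH. The same four-subtree staging of pebblings rooted at $v_1, v_2, v_3, v_4$, the same interrupted pebblings, the same slides to $p_2$, $p_3$, and the root, the same white placement on $p_3$, and the symmetric cleanup using $v_3, v_4$ all go through. Every move used is legal in the half game, and the arithmetic identity $min_{h+2} = min_h + 1$ holds for half-valued as well as integer-valued $min$'s, so the bookkeeping transfers unchanged. The improvement over whole black-white is inherited entirely from the improved $h = 3$ base case and propagates through the odd chain; for even $h$ the bound agrees with whole black-white.

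The main obstacle is verifying that the whole black-white inductive argument is genuinely colour-blind about whether pebble values are integer or half-integer. I would audit each move in that argument — the IH modifications that leave a pebble on a subtree root, the interrupted pebbling at $\troot^{(h)}$, the white placement on $p_3$ and its eventual removal via move (ii), and the reverse pebbling that removes the white on $v_4$ — to confirm each is a valid half-pebble move and that the peak accounting never silently relies on pebbles being integers. I expect this audit to pass cleanly, with the genuinely new partial-slide reasoning being localized entirely to the $h = 3$ base case.
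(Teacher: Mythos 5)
Your proposal is correct, but it takes a genuinely different route from the paper. The paper runs a single induction chain with step $h \to h+1$ and exploits half pebbles in \emph{every} inductive step: it leaves only $0.5$ black pebble weight on the root of the first principal subtree (cost $min_h + 0.5 = min_{h+1}$), later places $0.5$ white there so that node reaches value $1$ with mixed colours, and slides to the root from the second subtree, so the $0.5$ savings is realized per level and the strengthened hypothesis (conditions (1)--(3)) is carried along with $min_{h+1}=min_h+0.5$. You instead keep the step $h \to h+2$ of the Section 4.1 whole black-white argument, split into an even chain (which is literally the Section 4.1 pebbling, valid in the half game) and an odd chain seeded by an explicit $h=3$ construction with peak $2.5$; your $h=3$ moves (partial slide to $b(p_2)=0.5$, rule (iii) topping $p_2$ up to value $1$, slide to the root leaving $w(p_2)=0.5=min_3-2$, cleanup within $2.5$) are legal and the accounting checks out, and the Section 4.1 inductive step indeed uses the hypothesis only through conditions (1)--(3) and the identity $min_{h+2}=min_h+1$, so it is insensitive to half-integer values. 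What each buys: the paper's version handles all heights uniformly in one chain, needs no re-audit of the black-white step, and structurally foreshadows the fractional upper bound of Section 5.2 (leaving fractional weight on subtree roots); your version localizes all genuinely fractional reasoning to one base case and reuses proved machinery, at the cost of the colour-blindness audit you describe and of obtaining the per-level savings only indirectly through the odd chain.
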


\noindent
To show this we use induction.\\

\noindent
{\bf Induction Hypothesis [IH$(h)$]:} 

\noindent
Let $min_h = h/2 + 1$.  Let $\troot$ be a time such that $rw_\pi(\troot)=1$.\\
For $h\geq 2$
there exist a $half$ $pebbling$ $game$ {\bf root-pebbling} $\pi$ of $T_2^h$ such that $sw_\pi \leq min_h$ at all times. Additionally,

(1) $b.rw_\pi(\troot) = 1$

(2) $w.w_\pi(\troot) \leq min_h-2$

(3) White pebble weight at $\troot$ can be removed using $w_\pi(t) \leq min_h$ for  $t > \troot$\\

\noindent
{\bf Base Case :} h = 2.

There are 2 children of the root. We place 2 black pebble weight on the leaves and slide it to the root. Thus, $sw_\pi \leq 2$ at this time and all previous times. Condition (2) and (3) are satisfied since we have no white pebble weight. Thus the IH is satisfied in the base case.\\

\noindent
{\bf Induction step :} We prove the induction hypothesis for $h+1$ assuming it for $h'$,
$2\le h' \le h$. Let $P_2$ and $P_3$ be the principal subtrees. Note $min_{h+1} = min_h + 0.5$.\\

We simulate the pebbling in the IH for height $h$ in $P_2$. We modify the pebbling to leave half a black pebble on the root. This requires at most half a pebble more or $min_{h+1}$ pebble weight.

We then simulate the pebbling in the IH for height $h$ in $P_3$. We interrupt the pebbling when the root of $P_3$ is pebbled. We use $sw_\pi \leq min_{h+1}$ at all times before this point.

We remove all other black pebble weight in $P_3$ such that we have $min_h - 2$ white pebble weight in the subtree $P_3$ by condition (2) and an additional pebble on the root of $P_3$.

We next add half a white pebble to the root of $P_2$ and slide a pebble from the root of $P_3$ to the root. Thus condition (1) is satisfied. We remove all black pebble weight and have half a white pebble on the root of $P_2$ and $min_h-2$ white pebble weight in $P_3$. We thus satisfy condition (2). Additionally, we only have white pebble weight present at this $\troot$ and removing it will show condition (3).\\

We remove the $min_h - 2$ white pebble weight in $P_3$. This takes $min_h$ pebble weight by condition (3) of the IH. The only other pebble weight is the half pebble on the root of $P_2$.

We simulate the pebbling from the IH for height $h$ in $P_2$. Instead of placing a black pebble we remove the white pebble on the root. This takes $min_h$ while maintaining the half a white pebble on the root of $P_2$. Thus condition (3) is not violated.

Thus the IH is satisfied.\\

We next show we can not do better using fractional pebbles. However, we also show there are strategies not available using only half pebbles.\\\\

\subsection{Fractional Pebbling Game Upper Bounds}

We prove the following theorem which shows an upper bound for the $fractional $ $pebbling$ $game$ defined in Section 2. Similar results have been known since \cite{c:pebjournal}.

\begin{theorem}
Let $min_h = (d-1) * h/2+1$.
There exists a $fractional $ $pebbling$ $game$ {\bf root-pebbling} $\pi$ of $T^h_d$ such
that for all times t, $w_\pi(t) \le min_h$.
\end{theorem}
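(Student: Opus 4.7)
I plan to prove this by induction on $h$, generalizing the half pebbling game upper bound argument of the previous subsection to arbitrary arity $d$. As in Section 5.1, the bare statement is too weak to carry the induction, so I would strengthen it into an induction hypothesis that asserts, besides the existence of a \textbf{root-pebbling} $\pi$ of $T^h_d$ with $sw_\pi(t) \leq min_h$ at all times, that (1) at a distinguished time $\troot$ the pebble on the root is entirely black, i.e.\ $b.rw_\pi(\troot) = 1$, (2) there is only a bounded amount of white pebble weight in the tree at $\troot$, and (3) this residual white pebble weight can be removed while $w_\pi(t) \leq min_h$ is maintained for all $t > \troot$. These extra conditions are what make sub-pebblings composable without the peak weight blowing up.

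The base case is $h = 2$, where $T^2_d$ is the root with $d$ leaves: place black pebble weight $1$ on each leaf (peak weight $d = min_2$) and slide to the root. Conditions (2) and (3) hold vacuously since no white weight is ever used. For the inductive step from $h$ to $h+1$, note that $min_{h+1} = min_h + (d-1)/2$. Let $p_1,\dots,p_d$ be the principal children of the root. I would process the $p_i$'s one at a time: invoke the IH inside the subtree rooted at $p_i$ to bring pebble weight $1$ onto $p_i$, then reduce $b(p_i)$ to the residual value $1/2$ before starting work on $p_{i+1}$. While pebbling $p_i$, the already-processed children $p_1,\dots,p_{i-1}$ contribute $(i-1)/2$ weight, so the peak is $(i-1)/2 + min_h$, maximized at $i = d$ and equal to $(d-1)/2 + min_h = min_{h+1}$.

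Once each child holds black weight $1/2$ (with $p_d$ possibly still at weight $1$), I would add fractional white pebble weight to each $p_i$ for $i < d$ to bring its total pebble weight up to $1$, then apply rule (ii) at the root to set the root's black weight to $1$ while zeroing out the children's black weights. This defines $\troot$ and satisfies condition (1). After the slide, the tree holds the pebble on the root together with $(d-1)/2$ white weight distributed over the children, establishing condition (2). To discharge the remaining whites, I remove the root's black pebble and then peel the white off each $p_i$ one at a time using the IH's condition (3) inside the subtree rooted at $p_i$; during the removal of $w(p_i)$ the subtree peak is at most $min_h$ while the other children still carry at most $(d-1)/2$ of stale white weight, for a total peak of $min_{h+1}$. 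This verifies condition (3) and completes the induction.

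The main obstacle is choosing the residual weight left on each processed child so that every peak in the schedule is simultaneously bounded by $min_{h+1}$: the bottom-up pebbling phase imposes that the sum of the residuals before processing $p_i$ be at most $(d-1)/2$ for every $i \leq d$, while the white-discharge phase imposes a parallel bound on the residual white weight that is still resident when the next $w(p_i)$ is being removed. The uniform choice $1/2$ balances both constraints exactly, but verification must be careful at the slide step, where one must check that the configuration holding each $p_i$ at weight $1$ (a mixture of $1/2$ black and $1/2$ white) together with the newly placed root pebble does not momentarily exceed $min_{h+1}$, and at small $h$ where conditions (2) and (3) of the IH only barely hold.
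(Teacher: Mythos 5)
Your route is a direct generalization of the Section 5.1 half-pebbling argument: an unshifted induction hypothesis with a uniform residual of $1/2$ left on each processed child. This is genuinely different from the paper's proof, which uses an $\epsilon$-shifted induction hypothesis on sub-root pebblings together with the two auxiliary lemmas of Section 5.2; those lemmas let the paper leave (or remove) an arbitrary fractional amount $2E$ on a subtree root at peak cost only $min_h+E$ rather than $min_h+2E$, and the paper concentrates the total residual $(d-1)/2-\epsilon$ on as few subtree roots as possible instead of spreading $1/2$ per child. The shift is precisely the device that keeps the cleanup of a subtree's leftover white weight cheap while a residual sits on its root; your plan forgoes it, and that is exactly where your bookkeeping has a gap.

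The gap: invoking the IH inside the subtree rooted at $p_i$ does not just deliver black weight $1$ on $p_i$. At that subtree's $\troot$ there is also up to $min_h-d$ white pebble weight left \emph{inside} the subtree, which must be removed (at internal cost up to $min_h$, by condition (3)) while your residual $1/2$ stays on $p_i$ and the earlier residuals stay on $p_1,\dots,p_{i-1}$; so the true peak of stage $i<d$ is $(i-1)/2+min_h+1/2$, not $(i-1)/2+min_h$ as you state (it still does not exceed $min_{h+1}$, but only exactly). Likewise, your description of the configuration after the root slide --- root pebble plus $(d-1)/2$ white on the children --- omits the up-to-$min_h-d$ white weight still sitting inside the last subtree; it must be counted in your condition (2) (which therefore needs the quantitative bound ``white weight at most $min_h-d$'', also needed so that the pre-slide configuration of weight $d$ plus these whites fits under $min_{h+1}$) and it must be removed first in the discharge phase. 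Finally, removing the $1/2$ white from a child $p_i$ is not accomplished by condition (3) alone: it requires re-running the full height-$h$ pebbling of that subtree and removing the white at its $\troot$ instead of placing black, with the $1/2$ white on $p_i$ adding to that subtree's internal peak, after which the re-run's own residual whites are cleaned via condition (3); with the stale whites on the unprocessed children this again comes to exactly $min_{h+1}$. So the uniform-$1/2$ scheme can be pushed through, but there is zero slack, and the omitted cleanup costs are precisely the steps that must be verified; as written, the proposal asserts intermediate configurations and peaks that are off by the subtrees' internal white weight.
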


\noindent
To show this we use induction.\\

\noindent
{\bf Induction Hypothesis [IH$(h)$]:}

\noindent
Let $min_h = (d-1) * h/2+1$. Let $\troots$ be a time such that $rw_\pi(\troots)=1$ for all principal subtrees. \\
For $h\geq 3$, $\epsilon \in [-0.5,0.5]$, there exists a $fractional$ $pebbling$ $game$ {\bf sub-root pebbling} $\pi$ of $T_d^h$ such that the following conditions are true. 

(0) there exists a time $\troots$ such that $rw_\pi(\troots)=1$ for all subtrees

(1) $sw_\pi(t) \leq min_h-\epsilon$ for $t \leq \troots$

(2) $w.w_\pi(\troots) \leq min_h+\epsilon - d$

(3) Any white pebble weight at $\troots$ can be removed using $sw_\pi(t) \leq min_h+\epsilon$ for  $t > \troots$ 

(4) $b.rw_\pi(\troots)=1$ for at least one subtree

(5) $sw_\pi(t) \leq min_h+\epsilon$ for  $t > \troots$\\

\begin{obs} 
The previous IH resembles the IH for the lower bound to be proved later.
\end{obs}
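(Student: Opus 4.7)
The plan is to treat this observation as a structural claim about the parallelism between the upper-bound IH just stated and the lower-bound IH that will appear in Section 5.3. Because the lower-bound IH has not yet been written in the excerpt, I would justify the observation in two steps: first anticipate the form the lower-bound IH must take from the shape of the upper-bound IH, then articulate the correspondence clause by clause.

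First I would write out the anticipated lower-bound IH. Where the upper-bound IH asserts the \emph{existence} of a sub-root sub-pebbling meeting conditions (0)--(5), the lower-bound IH will be a universal statement: for \emph{every} fractional sub-root sub-pebbling of $T_d^h$, if certain conditions mirroring (2), (3), (4) hold at $\troots$, then $sw_\pi(t)$ must meet quantitative bounds dual to (1) and (5). The shifting parameter $\epsilon \in [-0.5,0.5]$ plays the same role on both sides: it encodes that savings before $\troots$ are paid back after $\troots$ (and conversely). This is exactly the shifting argument informally described in the introduction, and its presence on both sides is the core content of the observation.

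Next I would match the clauses one-to-one. Condition (0) is the same premise in both IHs, namely the existence of a time with a pebble on each subtree root; this is simply the definition of a sub-root sub-pebbling. Conditions (1) and (5) are the tight quantitative bounds on $sw_\pi$ before and after $\troots$, with $\leq$ in the upper-bound IH becoming $\geq$ in the lower-bound IH. Conditions (2) and (3) constrain the white pebble weight present at $\troots$ and the cost of later removing it: in the upper bound we promise a pebbling achieving these constraints, while in the lower bound we must show that any violation forces additional weight elsewhere. Condition (4), which fixes at least one subtree root to carry a full black pebble at $\troots$, is what enables the recursive step of the induction in both directions, since it gives us a handle on which subtree to peel off.

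The main obstacle I expect is ensuring that the dual conditions carry exactly the right signs on $\epsilon$ and exactly the right numeric constants: a small mismatch between the upper and lower hypotheses would either leave a gap (so the bounds would not be tight) or render the induction unprovable, since the shifting parameter must be usable in both directions at every recursive step. Beyond the bookkeeping, the substantive content of this observation is methodological: it signals that the inductive template used to build a strategy in Section 5.2 is the same template used adversarially against arbitrary strategies in Section 5.3, so the upper-bound proof serves as a blueprint telling us exactly what the lower-bound proof must rule out.
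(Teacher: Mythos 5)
Your reading is essentially correct and matches the paper's intent: the paper offers no formal proof of this observation, which simply flags that the shifting parameter $\epsilon$ appears with the same quantitative role in both IHs --- $sw_\pi(t) \le min_h - \epsilon$ before $\troots$ in both, and the dual $\le min_h+\epsilon$ (upper bound) versus $\ge min_h+\epsilon$ (lower bound) after. One small slip: condition (4) (a full black pebble sitting on some subtree root at $\troots$) has no direct mirror in the lower-bound IH, which instead carries the initial-weight constraint $b.sw_\pi(0) \le 1-\epsilon$ and allows $b.rw_\pi(0)$ to be arbitrary, so the clause-by-clause correspondence is looser than your proposal suggests.
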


The next two lemmas are to be used in the proof of the Induction hypothesis. They are to be applied to the subtrees of the root. They deal with leaving black pebble weight and removing white pebble weight.

\begin{lemma} \label{fup1}

It follows from the IH for height h, for $E \in (0,0.5]$, that there exists a {\bf pebbling} $\pi$ with $w_\pi(t) \leq min_h+E$ for all times $t$ and $w_\pi(0)=0$, that ends with $b.rw_{\pi} = 2E$ and $sw_\pi=0$.
\end{lemma}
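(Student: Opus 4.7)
The plan is to invoke the induction hypothesis for height $h$ with $\epsilon=-E$, which lies in the permitted range $[-0.5,0.5]$ since $E\in(0,0.5]$, and let $\pi'$ be the resulting sub-root pebbling of $T_d^h$. Because a sub-root pebbling only aims to pebble principal subtree roots, we may take $\pi'$ to place no pebble weight on the root of $T_d^h$ itself. I would then build the pebbling $\pi$ claimed by the lemma in three phases: drive $\pi'$ to $\troots$, harvest $2E$ units of black weight onto the root with a single sliding move, and finally run the post-$\troots$ cleanup supplied by the IH before erasing any residual black weight in the subtrees.

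In the first phase I simulate $\pi'$ from time $0$ up to $\troots$. By condition~(1), $sw_{\pi'}(t)\leq min_h-\epsilon=min_h+E$ on this interval, and because the root of $T_d^h$ carries no weight, the total pebble weight stays at most $min_h+E$. At $\troots$, condition~(0) gives every principal subtree root weight~$1$, so move~(ii) is applicable at the root; condition~(4) provides a principal subtree $P_v$ with $b.rw_{\pi'}(\troots)[P_v]=1$, from which I can slide. I then apply move~(ii) at the root of $T_d^h$ by setting $b(\text{root})=2E$ and simultaneously decreasing the black weight on the root of $P_v$ by $2E$. The move merely transfers $2E$ from the root of $P_v$ onto the root, so the total weight is unchanged and remains at most $min_h+E$, while the root now holds $b.rw=2E$ and the root of $P_v$ holds $b=1-2E$.

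In the third phase I execute the remainder of $\pi'$ after $\troots$, which cleans every white pebble present at $\troots$ while keeping $sw_\pi(t)\leq min_h+\epsilon=min_h-E$ by conditions~(3) and~(5); adding the $2E$ sitting on the root gives total weight at most $min_h+E$ throughout. Any remaining black weight in the subtrees (including the $1-2E$ still on the root of $P_v$) is then removed by repeated applications of move~(i), which can only decrease the total weight. The pebbling ends with $b.rw_\pi=2E$ and $sw_\pi=0$, as required.

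The main obstacle I anticipate is verifying that the cleanup inherited from $\pi'$ remains legal after phase~2 has reduced the black weight on the root of $P_v$ from $1$ to $1-2E$. I would argue as follows: every white pebble to be cleaned sits strictly inside some principal subtree; in any $P_i$ with $i\neq v$ the cleanup is fully independent of $P_v$; and inside $P_v$ no cleanup move is invoked at the root of $P_v$ itself (because $w=0$ there by condition~(4)), so the remaining moves act at nodes strictly below the root of $P_v$ and their move preconditions depend only on the configuration of $P_v$ below its root, which the slide did not alter. Hence every move of the cleanup is still legal, the bounds from conditions~(3) and~(5) carry over, and the total weight stays below $min_h+E$ for the duration of $\pi$.
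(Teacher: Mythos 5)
Your proposal is correct and follows essentially the same route as the paper: invoke the height-$h$ induction hypothesis with $\epsilon=-E$, slide $2E$ black weight from the black-pebbled subtree root (condition (4)) onto the root just after $\troots$, and then use conditions (3) and (5) to clean up within $sw_\pi \leq min_h - E$, so the $2E$ on the root keeps the total at most $min_h+E$ throughout. Your extra care about the legality of the post-$\troots$ cleanup after the slide is a harmless refinement the paper leaves implicit.
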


\begin{lemma} \label{fup2}

It follows from the IH for height h, for $E \in (0,0.5]$, that there exists a {\bf pebbling} $\pi$ with $w_\pi(t) \leq min_h+E$ for all times $t$,  $w.rw_{\pi}(0) = 2E$  and $sw_\pi(0)=0$, that ends with $w_\pi=0$.
\end{lemma}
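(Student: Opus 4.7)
The plan is to invoke IH$(h)$ as a black box with the parameter choice $\epsilon = E$, splice in one application of rule (ii) at the root of $T_d^h$ to discharge the initial white pebble, and then clean up the principal subtrees using the post-$\troots$ portion of the IH pebbling.

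First, start at time $0$ with $w.rw_\pi(0) = 2E$ and $sw_\pi(0) = 0$, and execute the sub-root pebbling guaranteed by IH$(h)$ with $\epsilon = E \in (0,0.5]$, choosing a witness whose moves are confined to the principal subtrees and hence do not touch the root of $T_d^h$. Throughout the pre-$\troots$ phase, condition (1) gives $sw_\pi(t) \leq min_h - E$, so
\[
w_\pi(t) = w.rw_\pi(t) + sw_\pi(t) = 2E + sw_\pi(t) \leq min_h + E,
\]
which is the bound required by the lemma.

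Next, at time $\troots$, condition (0) of IH$(h)$ ensures that every principal subtree root has pebble value $1$, so rule (ii) is applicable at the root of $T_d^h$: decrease $w.rw$ from $2E$ to $0$, do not increase $b.rw$, and do not slide any children, so the subtree configuration is untouched. After this single move, $w.rw_\pi = 0$ and $sw_\pi \leq min_h - E$, so $w_\pi \leq min_h + E$. Resume the IH pebbling past $\troots$; condition (5) gives $sw_\pi(t) \leq min_h + E$ for $t > \troots$, and since the root is now empty, $w_\pi(t) \leq min_h + E$. The IH pebbling terminates with no white pebble weight, and any surviving black pebble weight is eliminated by a sequence of rule (i) moves, each of which only decreases $w_\pi$. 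We terminate with $w_\pi = 0$, as required.

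The delicate point is the choice $\epsilon = E$. The $2E$ of white weight sitting on the root throughout the pre-$\troots$ phase exactly consumes the $-E$ slack of condition (1), bringing the total to $min_h + E$; the same number is the post-$\troots$ ceiling supplied by condition (5). Any other value of $\epsilon \in (0,0.5]$ would violate the $min_h + E$ bound in at least one of the two phases, so calibrating $\epsilon$ is the only non-routine ingredient; everything else is bookkeeping on top of IH$(h)$ together with one rule-(ii) move at the root and some cleanup rule-(i) moves.
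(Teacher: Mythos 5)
Your proposal is correct and follows essentially the same route as the paper: run the height-$h$ upper-bound IH with $\epsilon = E$, use the fact that all principal subtree roots carry weight $1$ at $\troots$ to discharge the $2E$ white weight on the root there, and finish with the post-$\troots$ guarantees of the IH, giving $w_\pi(t) \leq min_h + E$ throughout. The extra remarks (the rule-(ii) move adding no black weight, the final rule-(i) cleanup, the calibration of $\epsilon$) are just explicit bookkeeping of what the paper's shorter proof leaves implicit.
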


\noindent
 {\bf Proof of Lemma \ref{fup1}}\begin{proofL} 
 
We modify the pebbling in the IH with $\epsilon=-E$. We slide 2E black pebble weight to the root a step after $\troots$. This does not exceed $min_h+E$ weight since we use the same weight as at $\troots$, $w_\pi(\troots) \leq min_h +E$. 

We remove all black pebble weight and we use $sw_\pi \leq min_h-E$ to remove the remaining white pebble weight by condition 3 of IH. Thus for $t > \troots$, since we maintain $b.rw_{\pi}(t) = 2E$, we use $w_\pi(t) \leq min_h+E$. Thus we use $w_\pi(t) \leq min_h+E$ for all times t and have satisfied the conditions of the lemma.
 
 \end{proofL} 

\noindent
 {\bf Proof of Lemma \ref{fup2}}\begin{proofL} 
 
Given the white pebble weight on the root we follow the pebbling in the IH with $\epsilon=E$. We modify the pebbling by removing the pebble weight on the root at time $\troots$. We use $sw_\pi(t) \leq min_h-E$ for $t \leq \troots$ while maintaining $w.rw_{\pi}(t) = 2E$. 

We then remove all black pebble weight and use $sw_\pi(t) \leq min_h+E$ for $t > \troots$ to remove the white pebble weight by the IH. Thus we use $w_\pi(t) \leq min_h+E$ for all t.
 
 \end{proofL}

\noindent
{\bf Proof of the Induction Hypothesis}

\noindent
{\bf Base Case :} h = 3

In this case $min_h$ = $min_3$ = $3/2*(d-1) + 1$.

Let the nodes $v_i$ be the children of the root, $i \in [d]$. Let $v_{last}$ be the last node enumerated in this way.\\

For the first (d-1) $v_i$, place $(d-1)/2-\epsilon$ black pebble weight between them. This value is the amount in excess of d, the amount needed to pebble the leaves of the final subtree. Do this by placing d pebble weight on the leaves and sliding the largest possible portion of this amount to the subtree root (at most 1 per subtree root). Next, remove black pebble weight not on the subtree roots. Repeat starting with the first subtree until we place $(d-1)/2-\epsilon$ black pebble weight.

There are enough children of the root which are not $v_{last}$ to leave this amount since $(d-1)/2-\epsilon \leq (d-1)/2+(d-1)/2 = (d-1)$. 

We must use d pebble weight on the leaves each time we leave a fraction of a black pebble on a $v_i$. However, $(d-1)/2 -\epsilon + d = 3/2(d-1) -\epsilon + 1 = min_3 -\epsilon$. Thus we do not violate (1) in the IH when leaving $(d-1)/2-\epsilon$ black pebble weight on the first (d-1) $v_i$.

We then use d pebble weight on the leaves of $v_{last}$. We then slide one pebble weight to $v_{last}$ and remove the weight on the leaves.\\

We then add $(d-1)/2 + \epsilon$ white pebble weight to the first (d-1) $v_i$ to reach $\troots$.

At this time we have d pebble weight, thus we have not violated (1).

In this way $sw_\pi(t) \leq min_3-\epsilon$ for $t \leq \troots$ thus $\pi$ satisfies (1).

Since at this time $v_{last}$ is black pebbled (4) is satisfied.

Also $w.w_\pi(\troots) = (d-1)/2 + \epsilon = min_3 - d$, thus (2) is satisfied.

We then remove all black pebble weight.\\

We may then remove any of this white pebble weight using d pebble weight.

When we remove this white pebble weight we have $sw_\pi \leq (d-1)/2 + \epsilon + d = 3/2(d-1) + \epsilon + 1 = min_3 + \epsilon$ as required. Thus (3) is satisfied. 

Since this is all we must do and this is the most we use after $\troots$, condition (5) is satisfied.\\

Thus the specified $\pi$ satisfies all conditions and the IH is satisfied.\\

\noindent
{\bf Induction step :} We prove the induction hypothesis for $h+1$ assuming it for $h'$,
$3\le h' \le h$.\\
Note $min_{h+1} = min_h + (d-1)/2$.

Let $P_i$ be the the subtrees of the root, $i \in [d]$. Let $P_{last}$ be the last subtree enumerated in this way.\\ 

Using {\bf Lemma \ref{fup1}} we leave $(d-1)/2 - \epsilon$ pebble weight on the root of the first (d-1) subtrees.

If $(d-1)/2 - \epsilon \leq 1$. We leave $(d-1)/2 - \epsilon$ pebble weight on the last of the first (d-1) $P_i$. To do so we require $w_\pi \leq min_h +((d-1)/2 - \epsilon)/2$ by {\bf Lemma \ref{fup1}}. In the other subtrees we leave no pebble weight. Thus we do not exceed $min_h + (d-1)/2-\epsilon$ and do not violate (1).

If $(d-1)/2 - \epsilon > 1$. We leave one pebble weight on the root of the last of the first (d-1) $P_i$. Thus we require $min_h + 0.5$ by {\bf Lemma \ref{fup1}}. At this time we have $(d-1)/2 - \epsilon-1$ on the root of the other $P_i$. In the prior trees we require at most the same pebble weight while maintaining less in the other trees at that time. Thus we do not exceed $min_h + (d-1)/2-\epsilon$ and do not violate (1).\\

For the final subtree, we use the pebbling in the IH for height h, with $\epsilon = 0$, except we modify the pebbling to slide a pebble in the step after $\troots$. A slidable pebble exists by condition (4). We then remove all black pebbles in $P_{last}$ other than the black pebble on the root, leaving $min_h - d$ white pebble weight. Since we do not use more than pebble weight $min_h$ in $P_{last} $ while maintaining $(d-1)/2-\epsilon$ in the other subtrees, we do not violate (1).

We then use $(d-1)/2+\epsilon$ white pebble weight on the root of the other $P_i$ to reach $\troots$. At this time we have d pebble weight on the subtree roots while having $min_h - d$ white pebble weight in $P_{last}$. We thus have $min_h$ total pebble weight at this time and do not violate (1). \\

Thus, condition (1) is satisfied as we have $sw_\pi(t) \leq min_h-\epsilon$ for all $t \leq \troots$.

At this time we have $b.rw_\pi(\troots)[P_{last}] =1$, thus (4) is satisfied.\\

We then remove all black pebble weight.\\

We have $(d-1)/2+\epsilon$ white pebble weight on the roots of the subtrees while having $w.w_\pi(\troots)[P_{last}] = min_h-d$. Thus we have $w.w_\pi(\troots) \leq min_h +(d-1)/2+\epsilon -d = min_{h+1} + \epsilon - d$ and (2) is satisfied.

We first remove the white pebble weight from the subtree $P_{last}$. By IH, this requires $sw_\pi[P_{last}] \leq min_h$ while maintaining $(d-1)/2+\epsilon$ pebble weight in the other subtrees. Thus, to remove this white pebble weight we require $sw_\pi(t) \leq min_{h+1}+\epsilon$ for $t > \troots$.

We next remove white pebble weight from the first subtree with white pebble weight on the root, $P_{first}$. Suppose, $w.rw_\pi(\troots)[P_{first}] = 2E$. Using {\bf lemma \ref{fup2}} we can remove the white pebble weight using $w_\pi(t)[P_{first}] \leq min_h + E$. At this time we have less than $(d-1)/2+\epsilon - 2E$ pebble weight in the other trees. Thus $sw_\pi(t) \leq min_{h+1}+\epsilon$. We then remove the white pebble weight on the root of any remaining subtree in the same way.

Thus to remove the white pebble weight we required $sw_\pi(t) \leq min_{h+1}+\epsilon$ for $t>\troots$ and condition (3) is satisfied. Also, all times $t > \troots$, $sw_\pi(t) \leq min_{h+1}+\epsilon$ and (5) is satisfied.\\

Thus the specified pebbling $\pi$ satisfies all conditions and the IH is satisfied.\\\\

This result is obviously not possible without the use of fractional pebbles. Thus fractional pebbles allow for a large number of strategies that are not possible in other pebbling games. This gives us the intuition as to why we need a stronger induction hypothesis in the proof of the main lemma.\\\\

\subsection{Fractional Pebbling Game Lower Bounds} 

\noindent
We now prove the main theorem, which we state formally as :\\

\noindent
{\bf Main Theorem} \\Let $min_h = (d-1)h/2+1$.
For every {\bf root-pebbling} $\pi$ of $T^h_d$ there is a time $t$ such
that $w_\pi(t) \ge min_h$.\\

\noindent
The proof is simple for $h=2$. The proof for $h\geq3$ is by induction on $h$.\\

When Combined with the previous section we have a tight bound on the number of pebbles taken to complete the $fractional $ $pebbling$ $game$ for balanced $d$-ary trees. The result is new. Similar, but loose, lower bounds can be found in \cite{c:pebjournal}. In \cite{c:pebjournal}, they are the result of a reduction to a similar problem \cite{k:bwpyr}, we take a more direct approach.

The theorem is shown using the following induction hypothesis.\\

\noindent
{\bf Induction Hypothesis [IH$(h)$]:} Let $\pi$ be a {\bf sub-root sub-pebbling}
of $T^h_d$. Let $\troots$ be a time such that $rw_\pi$($\troots$)=1 for all principal subtrees.\\

If $h\geq 3$, $\epsilon \in (-0.5,0.5]$, $b.sw_{\pi}$(0) $\leq$ $1-\epsilon$, $b.rw_{\pi}$(0) = arbitrary, 
and $\pi$ is such that $sw_\pi(t) \leq min_h-\epsilon$ for $t \leq \troots$, then there is a time $\tbs > \troots$ such that $sw_\pi(\tbs) \geq min_h+\epsilon$ and $w.sw_\pi(t) \geq 0.5+\epsilon$ for t in $[\troots, \tbs]$.\\

\begin{tabular} { |l|l|l|}\hline
initial conditions & additional conditions & consequences\\\hline
$b.sw_{\pi}(0) \leq 1-\epsilon$ & $sw_\pi(t) \leq min_h-\epsilon$ for $t \leq \troots$ & $sw_\pi(\tbs) \geq min_h+\epsilon$\\\hline
$b.rw_{\pi}$(0) = arbitrary & & $w.sw_\pi(t) \geq 0.5+\epsilon$ for t in $[\troots, \tbs]$\\\hline
\end{tabular}\\\\

The Induction Hypothesis can be interpreted as indicating that we require more after if we use less before.

\noindent
\begin{obs}
The Induction Hypothesis implies the theorem. 
This is the case since we must at some time, $\troot$, have pebble weight 1 on the root in a {\bf root-pebbling}. If at $\troot$ the root has any black pebble weight we must have reached a time $\troots$ to place this black pebble weight. If it has only white pebble weight at $\troot$, we must reach a time $\troots$ to remove this white pebble weight. White pebble weight must be removed to satisfy the conditions of a {\bf root-pebbling}. It is therefore impossible to always use less than $min_h$ since by the Induction hypothesis we would need to use more than $min_h$ after $\troots$.\\
\end{obs}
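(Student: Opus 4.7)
The plan is to prove the observation by contradiction, following the informal argument sketched in the observation itself. I assume that some root-pebbling $\pi$ of $T^h_d$ satisfies $w_\pi(t) < min_h$ for every time $t$, and I apply the IH with $\epsilon = 0$ to derive a contradiction. The two things to verify are: (a) that the existence of $\troots$ is forced by the rules of the game as soon as $\pi$ is a root-pebbling, and (b) that all hypotheses of the IH hold with $\epsilon = 0$ so that its conclusion directly contradicts the standing assumption.

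For (a), the definition of a root-pebbling supplies a time $\troot$ with $rw_\pi(\troot) = 1$, together with the requirement that the initial and final weights are $0$. I split on the composition of the root weight at $\troot$. If $b.rw_\pi(\troot) > 0$, take $t'$ to be the earliest time at which the black weight on the root is positive; by the black-pebble move rules, raising $b$ on an internal node requires each child of that node to have pebble value $1$ at that moment, so $t'$ is a time at which every principal subtree root carries weight $1$. If instead $b.rw_\pi(\troot) = 0$, then $w.rw_\pi(\troot) = 1$, and because the final weight is $0$ this white weight must eventually be removed; removing white weight from an internal node likewise requires each child to have pebble value $1$. In either case we obtain a time $\troots$ meeting the required property.

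For (b), $\pi$ is trivially a sub-root sub-pebbling, and since it starts empty we have $b.sw_\pi(0) = 0 \leq 1 - \epsilon$ for $\epsilon = 0$ and $b.rw_\pi(0) = 0$ is arbitrary in the sense required. The standing assumption gives $sw_\pi(t) \leq w_\pi(t) < min_h = min_h - \epsilon$ for every $t \leq \troots$. The IH then produces a time $\tbs > \troots$ with $sw_\pi(\tbs) \geq min_h + \epsilon = min_h$, whence $w_\pi(\tbs) \geq sw_\pi(\tbs) \geq min_h$, directly contradicting the assumption that $w_\pi(t) < min_h$ for all $t$.

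The only step that is more than bookkeeping is (a), and within it the only genuinely nontrivial subcase is the white-only branch: one must invoke both the termination condition (final weight zero) and the rule that removal of white weight from an interior node demands all children be fully pebbled, rather than relying on how weight $1$ arrived at the root in the first place. The choice $\epsilon = 0$ makes all the hypotheses of the IH automatic and suffices to defeat strict inequality $w_\pi(t) < min_h$; the comparison $sw_\pi \leq w_\pi$ is what transfers the IH conclusion on subtree weight back to the theorem's assertion about total weight.
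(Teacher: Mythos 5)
Your proof is correct and takes essentially the same approach as the paper: it locates a time $\troots$ at which all principal subtree roots carry weight $1$ (via the move required either to place black weight on the root or to later remove white weight from it), and then invokes the IH to contradict a blanket bound of $w_\pi(t) < min_h$. You make the paper's informal sketch precise by running the argument as a contradiction, choosing $\epsilon = 0$, and using $w_\pi \geq sw_\pi$ to transfer the IH's subtree-weight conclusion to the theorem's total-weight claim. One small imprecision: if $t'$ is the earliest time at which $b.rw_\pi > 0$, the configuration witnessing all children pebbled is $c_{t'-1}$, not $c_{t'}$ (a black sliding move could have dropped the children's weight by the time $c_{t'}$ is reached); take $\troots = t'-1$. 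This off-by-one does not affect the argument.
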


\noindent
{\bf Proof of the Base Case of the Induction Hypothesis}  (h = 3)

In this case $min_h = min_3 = 3/2(d-1) + 1 = 3/2d -1/2$.

Let the nodes $v_i$ be the children of the root.\\

\noindent
{\bf Case I :} The black pebble weight on the $v_i$ is never increased at any time t such that t $\leq$
$\troots$.  

Then the total black pebble weight of the $v_i$ at
$\troots$ is at most $1-\epsilon$, so the white pebble weight for these nodes at
$\troots$ must be at least $d-(1-\epsilon) = d-1+ \epsilon$.

Let $\tbs$ be the first time we remove white pebble weight after $\troots$. Since we must have pebble weight 1 on all of the children to remove white pebble weight we have that the total pebble weight required to remove white
pebble weight is at least
$d+(d - 1 + \epsilon) = 2d - 1 + \epsilon > 3/2d - 1/2 + \epsilon = min_h + \epsilon$ at time $\tbs$. 

$\tbs > \troots$, since at $\troots$ the pebble weight on the $v_i$ is d, thus at this time we could not have had the required pebble weight on the children due to the restriction on total pebble weight.

Also, during the interval [$\troots$, $\tbs$],
$w.sw_\pi(t) \geq (d-1) + \epsilon > 0.5 + \epsilon$, as required. 

Thus the IH is satisfied in this case.\\

\noindent
{\bf Case II :}  The black pebble weight on the nodes $v_i$ is increased at some time t such that $t \leq
\troots$.  

Let $t_a$* be one step before the last time of such an increase.
Let $\alpha$ be the total black pebble weight of the $v_i$ at time $t_a$*.
Then the total subtree pebble weight at time $t_a$* is at least d+$\alpha$,
which by assumption is at most $min_h - \epsilon$.  Therefore, d+ $\alpha$ $\le$ 3/2d -1/2 - $\epsilon$, and hence 
\begin{equation}\alpha \le 1/2d - 1/2 - \epsilon \label{05eps}\end{equation}
      
After this increase at time $t_a$* the total black pebble weight of the $v_i$
is at most 1 + $\alpha$.  Hence the white pebble weight of the $v_i$ at
$\troots$ satisfies $w.sw_\pi$($\troots$) $\ge$ d-(1 + $\alpha$) = d-1-$\alpha$.
    
Let $\tbs$ be the time just before the first time after $\troots$ that this
white pebble weight is decreased.  Since we need d pebble weight on the leaves at such a time,\\
$sw_\pi$($\tbs$) $\ge$ d+(d-1-$\alpha$) \\
= 2d -1-$\alpha$ \\
$\ge$ 2d -1 - 1/2d + 1/2 + $\epsilon$ (by $\ref{05eps}$)\\
= 3/2d - 1/2 + $\epsilon$ \\
= $min_h + \epsilon$, as required.

Also, $\tbs$ $>$ $\troots$, since at $\troots$ the pebble weight on the $v_i$ is d, thus we could not have had the required pebble weight on the children due to the restriction on total pebble weight.

Finally, during the interval [$\troots$, $\tbs$], $w.sw_\pi(t) \geq d-1-\alpha$ $\ge$ $d-1 - (1/2d - 1/2 - \epsilon)$  = $1/2d -1/2  + \epsilon$ $\geq$ $0.5 +  \epsilon$, as required $(d \geq 2)$. Thus the IH is satisfied in this case.\\

Thus, in the base case the IH is satisfied.\\\\

The next two lemmas are to be used in the proof of the induction step. They are to be applied to the subtrees of the root.

\begin{lemma} \label{flb1}
Let $\pi$ be a {\bf root sub-pebbling}
of $T^h_d$. Let $\troot$ be any time such that $rw_\pi$($\troot$) = 1.

It follows from the IH for height h, that if $E \in [0.0, 0.5)$, $b.sw_{\pi}$(0) $\leq$ $0.5+E$, $b.rw_{\pi}$(0) $\leq$ $2E$ and $\pi$ is such that $sw_\pi(t) \leq min_h-0.5+E$ for t $\leq$ $\troot$, then there is a time $\tbss$, such that $\troot<\tbss$, $w_\pi(\tbss) \geq min_h+0.5-E$ and $w.w_\pi(t) \geq 1-2E$ for t in $[\troot, \tbss]$.\\

\end{lemma}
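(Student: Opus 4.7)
The plan is to reduce the lemma to IH$(h)$ by setting $\epsilon := 0.5 - E$, which lies in $(0, 0.5]$ since $E \in [0, 0.5)$ and hence in the IH's admissible range. Under this substitution, the lemma's hypotheses $b.sw_\pi(0) \leq 0.5 + E$ and $sw_\pi(t) \leq min_h - 0.5 + E$ become exactly the IH's $b.sw_\pi(0) \leq 1 - \epsilon$ and $sw_\pi(t) \leq min_h - \epsilon$, and its conclusions $sw_\pi(\tbs) \geq min_h + \epsilon = min_h + 0.5 - E$ and $w.sw_\pi(t) \geq 0.5 + \epsilon = 1 - E \geq 1 - 2E$ are at least as strong as what the lemma needs for $w_\pi$ and $w.w_\pi$ (since $w_\pi \geq sw_\pi$ and $w.w_\pi \geq w.sw_\pi$). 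The gap to be bridged is that the IH is stated for a \textbf{sub-root sub-pebbling}, which requires a time $\troots$ at which every principal subtree root simultaneously has weight $1$, while the lemma only supplies $\troot$ with $rw_\pi(\troot) = 1$ on the root of $T^h_d$. I will split on how the root reaches weight $1$.

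\textbf{Case 1:} Rule (ii) is applied to the root at some move occurring by time $\troot$. Let $\troots$ be a time at which all principal subtree roots have weight $1$ (such a time exists just before any such rule (ii) application, and satisfies $\troots < \troot$). The lemma's upper bound on $sw_\pi$ for $t \leq \troot$ immediately gives the same bound for $t \leq \troots$, so IH$(h)$ applies with $\epsilon = 0.5 - E$ and delivers $\tbs > \troots$ with $sw_\pi(\tbs) \geq min_h + 0.5 - E$ and $w.sw_\pi(t) \geq 1 - E$ on $[\troots, \tbs]$. Because $sw_\pi(\tbs) > min_h - 0.5 + E$ exceeds the lemma's running upper bound, $\tbs$ must in fact lie past $\troot$, so setting $\tbss := \tbs$ works: $w_\pi(\tbss) \geq sw_\pi(\tbss) \geq min_h + 0.5 - E$, and $w.w_\pi(t) \geq w.sw_\pi(t) \geq 1 - E \geq 1 - 2E$ for $t \in [\troot, \tbss] \subseteq [\troots, \tbs]$.

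\textbf{Case 2:} Rule (ii) is never applied to the root on $[0, \troot]$. Since black weight on the root can only decrease (via rule (i)) in the absence of rule (ii), $b.rw_\pi(\troot) \leq b.rw_\pi(0) \leq 2E$, so $w.rw_\pi(\troot) = 1 - b.rw_\pi(\troot) \geq 1 - 2E$. Moreover, $w.rw_\pi$ can decrease only through rule (ii) on the root, so this white weight persists at $\geq 1 - 2E$ until the first such application, which must occur eventually because the sub-pebbling terminates with no white weight. Call the time just before that application $\troots > \troot$; at $\troots$ every principal subtree root has weight $1$. If $sw_\pi(t^*) > min_h - 0.5 + E$ for some $t^* \in (\troot, \troots]$, set $\tbss := t^*$ and use $w_\pi(\tbss) \geq sw_\pi(\tbss) + w.rw_\pi(\tbss) > (min_h - 0.5 + E) + (1 - 2E) = min_h + 0.5 - E$, with $w.w_\pi(t) \geq w.rw_\pi(t) \geq 1 - 2E$ throughout $[\troot, \tbss]$. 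Otherwise $sw_\pi(t) \leq min_h - 0.5 + E$ on all of $[0, \troots]$; then I apply IH$(h)$ with $\epsilon = 0.5 - E$ at this $\troots$, set $\tbss := \tbs$ (which exceeds $\troots > \troot$), and patch the white-weight bound from $w.rw_\pi(t) \geq 1 - 2E$ on $[\troot, \troots]$ with $w.sw_\pi(t) \geq 1 - E$ on $[\troots, \tbs]$.

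The main obstacle is Case 2: the root may be driven to weight $1$ purely by a white-pebble placement, so no qualifying $\troots$ sits at or before $\troot$ and the IH cannot be invoked at $\troot$ directly. The trick is to notice that the very white weight that postpones the IH contributes $\geq 1 - 2E$ to $w_\pi$ throughout $[\troot, \troots]$, so either this extra $w.rw_\pi$ combines with any early overshoot of $sw_\pi$ to furnish a legitimate $\tbss$ on its own, or $sw_\pi$ stays well-behaved all the way to $\troots$ and the IH can be applied there without loss.
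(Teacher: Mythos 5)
Your proposal is correct and follows essentially the same route as the paper's proof: reduce to the IH with $\epsilon = 0.5 - E$, split on whether a time with all principal subtree roots pebbled occurs by $\troot$ or only after, and in the latter case exploit the persistent white root weight of at least $1-2E$ together with either an early overshoot of $sw_\pi$ or an application of the IH at the first such time after $\troot$. The only difference is cosmetic (you phrase the case split via rule (ii) applications to the root rather than via existence of a $\troots \leq \troot$, and use strict versus weak inequality in the sub-case split), and all the verifications needed to invoke the IH are present.
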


\begin{tabular} { |l|l|l|}\hline
initial conditions & additional conditions & consequences\\ \hline
$b.sw_{\pi}(0) \leq 0.5+E$ & $sw_\pi(t) \leq min_h-0.5+E$ for t $\leq$ $\troot$ & $w_\pi(\tbss) \geq min_h+0.5-E$\\ \hline
$b.rw_{\pi}$(0) $\leq$ $2E$ & &  $w.w_\pi(t) \geq 1-2E$ for t in $[\troot, \tbss]$\\ \hline
\end{tabular}\\\\

\begin{lemma} \label{flb2}
Let $\pi$ be a {\bf root sub-pebbling}
of $T^h_d$. Let $\troot$ be any time such that $rw_\pi$($\troot$) = 1.

It follows from the IH for height h, that if $E \in [0, 1)$, $b.sw_{\pi}$(0) $\leq$ $0.5 + E$, at some time $t_0$, 0 $\leq$ $t_0$ $\leq$ $\troot$, $b.rw_{\pi}$($t_0$) $\leq$ $E$ and $\pi$ is such that $w_\pi(t) \leq min_h-0.5+E$ for t $\leq$ $\troot$, then there is a time $\tbss$, such that $\troot<\tbss$, $w_\pi(\tbss) \geq min_h+0.5-E$ and $w.w_\pi(t) \geq 1-E$ for t in $[\troot, \tbss]$.\\

\end{lemma}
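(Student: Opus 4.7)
The plan is to apply IH$(h)$ to $\pi$ with parameter $\epsilon := 0.5 - E$. Since $E \in [0, 1)$, this $\epsilon$ lies in the admissible range $(-0.5, 0.5]$. The two preconditions of the IH transfer directly: the given $b.sw_\pi(0) \leq 0.5 + E$ equals $1 - \epsilon$, and the given $w_\pi(t) \leq min_h - 0.5 + E$ on $[0, \troot]$, combined with $sw_\pi \leq w_\pi$, provides $sw_\pi(t) \leq min_h - \epsilon$ on $[0, \troots]$ whenever $\troots$ is chosen $\leq \troot$. I would take $\troots$ to be the latest children-pebbled time in $[0, \troot]$; if no such time exists, the argument jumps to the handling at the end of the third paragraph.

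Applying the IH yields $\tbs > \troots$ with $sw_\pi(\tbs) \geq min_h + 0.5 - E$ and $w.sw_\pi(t) \geq 1 - E$ on $[\troots, \tbs]$. In the clean case $\tbs > \troot$, setting $\tbss := \tbs$ finishes the proof: $w_\pi(\tbss) \geq sw_\pi(\tbss) \geq min_h + 0.5 - E$, and $w.w_\pi(t) \geq w.sw_\pi(t) \geq 1 - E$ on $[\troot, \tbss] \subseteq [\troots, \tbs]$.

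The main obstacle is the case $\tbs \leq \troot$. Here $sw_\pi(\tbs) \geq min_h + 0.5 - E$ and $w_\pi(\tbs) \leq min_h - 0.5 + E$ force $rw_\pi(\tbs) \leq 2E - 1 \leq E$, so in particular $E \geq 0.5$; moreover, by the choice of $\troots$, no move (ii) fires at the root on $(\troots, \troot]$, so $b.rw_\pi$ is nonincreasing there and $b.rw_\pi(\troot) \leq b.rw_\pi(\tbs) \leq E$, whence $w.rw_\pi(\troot) \geq 1 - E$. (The same conclusion holds in the alternative where no children-pebbled time lies in $[0, \troot]$: the hypothesis $b.rw_\pi(t_0) \leq E$ and the nonincrease of $b.rw_\pi$ throughout $[t_0, \troot]$ yield $b.rw_\pi(\troot) \leq E$ directly.) This white root weight can decrease only via a later move (ii) at the root, so there is a children-pebbled time $T > \troot$ with $w.rw_\pi(t) \geq 1 - E$ on $[\troot, T]$. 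I then reapply IH$(h)$ with $\troots' := T$: either the precondition $sw_\pi(t) \leq min_h - 0.5 + E$ holds throughout $(\troot, T]$, producing $\tbs' > T$ with $sw_\pi(\tbs') \geq min_h + 0.5 - E$ (so $\tbss := \tbs'$ works, and the $w.w_\pi \geq 1 - E$ bound extends past $T$ via $w.w_\pi \geq w.sw_\pi$), or the precondition fails at some intermediate $t \in (\troot, T]$, in which case $sw_\pi(t) > min_h - 0.5 + E$ combined with $rw_\pi(t) \geq w.rw_\pi(t) \geq 1 - E$ gives $w_\pi(t) > min_h + 0.5 \geq min_h + 0.5 - E$, so $\tbss := t$ closes the argument. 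The delicate point is the coordinated choice of the successive $\troots$ and the iterated use of the IH together with the fact that the residual white root weight keeps $w.w_\pi$ above the required threshold on $[\troot, T]$ automatically.
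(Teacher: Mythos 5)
Your proposal is correct and follows essentially the same route as the paper: apply the IH with $\epsilon = 0.5-E$ at a children-pebbled time at or before $\troot$ when its $\tbs$ lands after $\troot$ (the paper's Case 1), and otherwise argue that the root carries white weight at least $1-E$ at $\troot$ which persists until a children-pebbled time $T>\troot$, concluding either at an intermediate heavy time or by reapplying the IH at $T$ (the paper's Cases 2 and 3 with their A/B subcases). Your treatment merely merges the paper's Case 2 (no $\troots\leq\troot$, where the $t_0$ hypothesis is used) and Case 3 ($\tbs\leq\troot$, where the weight restriction forces $b.rw_\pi(\troot)\leq 2E-1\leq E$) into one uniform argument, which is a cosmetic streamlining rather than a different proof.
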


\begin{tabular} { |l|l|l|}\hline
initial conditions & additional conditions & consequences\\ \hline
$b.sw_{\pi}$(0) $\leq$ $0.5 + E$ & $w_\pi(t) \leq min_h-0.5+E$ for t $\leq$ $\troot$ & $w_\pi(\tbss) \geq min_h+0.5-E$\\ \hline
$b.rw_{\pi}$($t_0$) $\leq$ $E$, $t_0$ $\leq$ $\troot$ & & $w.w_\pi(t) \geq 1-E$ for t in $[\troot, \tbss]$\\ \hline
\end{tabular}\\\\

\noindent
We make the following observations : 

\begin{obs}
In {\bf Lemma \ref{flb1}} additional initial black pebble weight on the root allows us to use less white pebble weight for t in [$\troot$, $\tbss$] than in {\bf Lemma \ref{flb2}}.
\end{obs}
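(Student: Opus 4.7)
The plan is to verify this observation by a direct side-by-side comparison of the statements of Lemmas \ref{flb1} and \ref{flb2}, since the observation is a structural remark about how the two lemmas relate to each other rather than an independent claim requiring new pebbling arguments.

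First, I would read off the white subtree weight lower bounds that each lemma asserts on the interval $[\troot, \tbss]$. Lemma \ref{flb1} concludes $w.w_\pi(t) \geq 1 - 2E$, while Lemma \ref{flb2} concludes $w.w_\pi(t) \geq 1 - E$. For the relevant range $E \in [0, 0.5)$ we have $1 - 2E \leq 1 - E$, so the Lemma \ref{flb1} bound is weaker: it certifies that the pebbling can get away with a strictly smaller amount of white pebble weight across the interval than Lemma \ref{flb2} does.

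Next, I would line this difference up against the hypotheses on initial black pebble weight at the root. Lemma \ref{flb1} permits $b.rw_\pi(0) \leq 2E$ at time $0$, whereas the analogous hypothesis in Lemma \ref{flb2} only asks for $b.rw_\pi(t_0) \leq E$ at some single time $t_0 \in [0, \troot]$. Thus the initial black pebble weight that Lemma \ref{flb1} allows on the root is exactly $E$ larger than the comparable bound in Lemma \ref{flb2}, and this $E$ matches precisely the gap between the two white pebble weight conclusions.

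The verification is therefore a short bookkeeping comparison and no real obstacle arises. The reason to record this observation is conceptual rather than technical: it isolates the trade-off that one unit of additional initial black pebble weight on the root buys a corresponding unit of relaxation in the required white pebble weight over $[\troot, \tbss]$. This is the trade-off the induction step of the main theorem will exploit when it splits into cases according to how much initial black root weight each principal subtree carries.
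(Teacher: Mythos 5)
Your proposal is correct and takes essentially the same route any careful reader must take: the observation has no separate proof in the paper (it is stated as an \textbf{Observation} whose justification is a direct reading of the two lemma statements), and your side-by-side comparison of the hypothesis $b.rw_\pi(0)\leq 2E$ with conclusion $w.w_\pi(t)\geq 1-2E$ in Lemma~\ref{flb1} against $b.rw_\pi(t_0)\leq E$ with conclusion $w.w_\pi(t)\geq 1-E$ in Lemma~\ref{flb2} is exactly the bookkeeping that substantiates it. Your final remark, that in both lemmas the allowed black root weight plus the guaranteed white weight sums to $1$, is the cleanest way to see the trade-off the observation is isolating.
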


\begin{obs}
In {\bf Lemma \ref{flb2}} we introduce a time $t_0$. There may be more black pebble weight on the root before time $t_0$, however, it can not help us achieve the specified $\troot$ if it is removed before $\troot$.\\
\end{obs}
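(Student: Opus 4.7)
The plan is to justify the observation by showing that any black pebble weight present on the root of $T^h_d$ before $t_0$ and removed before $\troot$ cannot contribute to achieving $rw_\pi(\troot)=1$, so it is harmless to bound $b.rw_\pi$ only at the later time $t_0$ rather than at time $0$. The key structural fact I would invoke is that black pebble weight on the root of $T_d^h$ interacts with the rest of the tree only through moves applied \emph{to the root itself}: pebble move (i) can decrease it, and move (ii) applied to the root can increase it (when all children have pebble value $1$). Since the root has no parent, its black weight is never consumed as a child weight in any sliding step of move (ii), and no pebble move applied to any non-root node depends on the value of $b.rw_\pi$.

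With this structural fact in hand, suppose $\pi$ satisfies the hypotheses of Lemma \ref{flb2} with $b.rw_\pi(0) > E$ but $b.rw_\pi(t_0) \leq E$. I would construct a modified pebbling $\pi'$ by setting $b.rw_{\pi'}(0)=b.rw_\pi(t_0)$, leaving every other initial pebble weight unchanged, omitting from $\pi$ those applications of move (i) to the root during $[0,t_0]$ whose cumulative effect was to shed the excess $b.rw_\pi(0)-b.rw_\pi(t_0)$, and otherwise following $\pi$ move for move. Because no move in $[0,t_0]$ depends on the black weight sitting on the root, $\pi'$ is a valid pebbling; it reaches the same configuration as $\pi$ at time $t_0$ (and at every subsequent time), and $w_{\pi'}(t)\leq w_\pi(t)$ throughout $[0,t_0]$.

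This construction shows directly that the extra initial black root weight in $\pi$ was irrelevant for achieving $rw_\pi(\troot)=1$: the pebbling $\pi'$ reaches the same $\troot$ without it, using no more weight at any time. The only mildly delicate point is checking that deleting the chosen move (i) applications cannot disable any later move in $[0,t_0]$, but this is immediate since move (i) is strictly decreasing and enables no subsequent move. Hence the observation holds, and Lemma \ref{flb2} loses no generality by imposing its hypothesis on $b.rw_\pi$ at $t_0$ rather than at $0$; this in turn is what lets the lemma handle pebblings whose initial black root weight is arbitrary, at the modest cost of the weaker white-weight consequence noted in the comparison with Lemma \ref{flb1}.
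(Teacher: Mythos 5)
The paper states this Observation without proof, as an informal remark motivating the form of the $t_0$ hypothesis in Lemma \ref{flb2}, so there is no paper proof to compare against. Your key structural insight is exactly right and is the heart of any correct justification: the root of $T^h_d$ has no parent, so no pebbling move applied to any other node has $b.rw_\pi$ as a precondition, and black root weight participates only in moves (i) and (ii) applied to the root itself. From this it follows that reducing $b.rw_\pi$ over $[0,t_0]$ cannot disable any move and cannot decrease $w_\pi$ at any time, so the excess is strictly wasteful.

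One place your construction is a bit loose, though: you propose to set $b.rw_{\pi'}(0)=b.rw_\pi(t_0)$ and then delete ``those applications of move (i) to the root during $[0,t_0]$ whose cumulative effect was to shed the excess.'' This bookkeeping is ambiguous if $b.rw_\pi$ also \emph{increases} somewhere in $[0,t_0]$ (a move (ii) applied to the root at a time when both children are pebbled). In that case the decreases you would delete and the increases you would keep do not obviously net out to $b.rw_\pi(t_0)$, and you could end up with a $\pi'$ that disagrees with $\pi$ at $t_0$. The repair is straightforward and worth stating: define $\pi'$ by \emph{capping} black root weight, $b.rw_{\pi'}(t)=\min(b.rw_\pi(t),\,b.rw_\pi(t_0))$ for $t\le t_0$, leaving every other pebble value unchanged and following $\pi$ move for move. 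Each decrease in $\pi$ becomes a (possibly smaller) decrease, and each move-(ii) increase becomes a (possibly smaller) move-(ii) increase at the same time with the same legality, since legality depends only on the children. Then $\pi'$ matches $\pi$ exactly from $t_0$ onward, $w_{\pi'}(t)\le w_\pi(t)$ throughout, and $b.rw_{\pi'}(0)\le E$, which is what you want.

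It is also worth noting that the way Lemma \ref{flb2} actually exploits the $t_0$ hypothesis in its proof is more elementary than a modification argument: in Case~2 there is no time $\troots$ in $[t_0,\troot]$, and between consecutive $\troots$-events black root weight can only decrease (move (ii) is the only way to raise it), so $b.rw_\pi(\troot)\le b.rw_\pi(t_0)\le E$ directly, which is precisely what supplies $w.rw_\pi(\troot)\ge 1-E$. Your modification argument proves a stronger, more global statement and is a legitimate justification of the Observation, but if you intend it as a gloss on how the lemma is used, the monotonicity remark is the sharper and shorter one.
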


\begin{obs}
The IH implies conditions on the subtree pebble weight while the lemmas imply conditions on pebble weight anywhere.\\
\end{obs}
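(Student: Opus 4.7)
The plan is to verify this observation by direct inspection of the notation used in the two statements, as the observation is really a reading note highlighting a bookkeeping distinction. First I would point at the Induction Hypothesis, where every quantitative clause is phrased with the subtree-weight functions $sw_\pi$, $w.sw_\pi$, and $b.sw_\pi$ (the initial bound $b.sw_\pi(0)\le 1-\epsilon$, the hypothesis $sw_\pi(t)\le min_h-\epsilon$ for $t\le\troots$, and both consequences $sw_\pi(\tbs)\ge min_h+\epsilon$ and $w.sw_\pi(t)\ge 0.5+\epsilon$). By the definition in Section~2, each of these sums pebble weight only over the principal subtrees and therefore ignores whatever weight sits on the root itself.

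Next I would contrast this with Lemmas~\ref{flb1} and~\ref{flb2}, where the hypotheses and consequences are stated using the whole-tree weight functions $w_\pi$ and $w.w_\pi$, which by definition include both the root and the subtree weight. The observation then follows: the IH constrains only what happens strictly below the root, whereas the lemmas constrain the entire tree (root included). The only point worth emphasizing, and the mild conceptual obstacle, is that this is precisely what we need for the induction step of the Main Theorem: when we invoke a lemma on a principal subtree $P_i$ of $T_d^{h+1}$, its ``anywhere'' bound becomes a subtree-level bound inside $T_d^{h+1}$, which lets us then combine it additively with pebble weight in sibling subtrees to produce the subtree-level consequence demanded by the IH at height $h+1$. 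No calculation is needed beyond this reading of the two statements.
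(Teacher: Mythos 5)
Your reading is correct and matches the paper, which simply states this observation without further proof: the IH's hypotheses and conclusions are phrased entirely in terms of $sw_\pi$, $b.sw_\pi$, and $w.sw_\pi$, whereas the conclusions of Lemmas~\ref{flb1} and~\ref{flb2} are phrased in terms of $w_\pi$ and $w.w_\pi$, and by the Section~2 definitions the latter include the root while the former do not. Your added remark about why this matters for the induction step is also accurate: the lemmas are invoked on a principal subtree $P_i$ of $T_d^{h+1}$, so a whole-tree bound on $P_i$ contributes to the subtree weight of $T_d^{h+1}$, which is exactly what the IH at height $h+1$ requires.
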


\begin{obs}
The IH allows for arbitrary black root weight. Given the allowed pebbling moves, black root weight can not help us achieve $\troots$. This is not the case in the lemmas, it is possible that black root weight helps us attain $\troot$.\\
\end{obs}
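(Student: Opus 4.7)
The plan is to enumerate the fractional pebble moves and show that black pebble weight on the root can only be used in ways that do not contribute to placing pebble weight anywhere in the principal subtrees. Concretely, a value $b(i) > 0$ is read by only three kinds of move: (i) applied to $i$, which merely removes some of it; (ii) applied to the parent of $i$, which may optionally decrement $b(i)$ via sliding; and (iii) applied to $i$, where $b(i) + w(i) \leq 1$ bounds how much white weight can be added. When $i$ is the root of $T^h_d$ the parent does not exist, so (ii) cannot be invoked from above. Hence initial black root weight can only be removed or used to cap white root weight; neither action creates any pebble weight on a principal subtree.

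\textbf{Simulation.} To make this rigorous I would argue as follows. Let $\pi$ be a sub-root sub-pebbling with $b.rw_\pi(0) = \beta \geq 0$. Define $\pi'$ with $b.rw_{\pi'}(0) = 0$ and otherwise identical initial configuration, and carry out the same moves as $\pi$ except: (a) delete any move (i) that decreases $b(root)$, and (b) replace any move (iii) applied to the root so that it still raises $w(root)$ to $1 - b_{\pi'}(root)$ in the new state. Every move of $\pi$ that touches a non-root node is still legal in $\pi'$, because its legality depends only on the values at the touched node and its children, and for no non-root node is the root among its children. Consequently $\pi'$ and $\pi$ produce identical configurations on every non-root node at every time $t \leq \troots$, so $\pi'$ reaches $\troots$ with the same subtree weight profile and $sw_{\pi'}(t) \leq sw_\pi(t)$ throughout. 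This is exactly the content of the observation: initial $b.rw_\pi$ is free to be arbitrary in the IH because any pebbling with positive initial root weight is dominated, on the interval $[0,\troots]$, by one without it.

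\textbf{Contrast with the lemmas.} For $\troot$ (when the root itself has weight $1$) the simulation breaks down, since the initial $\beta$ is part of the target configuration: removing it from $\pi'$ forces additional work later to recreate root weight. A trivial witness is $b.rw_\pi(0) = 1$, in which case $\troot = 0$ is already valid with no subtree weight used, whereas $b.rw_\pi(0) = 0$ forces the full lower bound of the IH. This explains why Lemmas \ref{flb1} and \ref{flb2} must explicitly constrain $b.rw_\pi$ at time $0$ (or at some $t_0 \leq \troot$) while the IH leaves it unrestricted.

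\textbf{Main subtlety.} The one case requiring care is move (ii) applied to the root itself inside $\pi$: this move requires every child of the root to have pebble weight $1$, hence the time at which it happens is itself a valid candidate for $\troots$. So such a move either occurs strictly after $\troots$ (and hence outside the interval the observation concerns) or we may take it as the defining $\troots$; either way the simulation on $[0,\troots]$ is unaffected. Once this point is handled, all remaining verifications are routine case checks on the four move types.
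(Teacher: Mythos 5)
The paper states this observation without proof, so there is no argument of its own to compare against; your simulation argument is a correct formalization of the informal claim. The core reason it works is exactly the one you isolate: the root has no parent, so its black weight can never be slid downward and never appears in the legality condition of a move at a non-root node. Consequently stripping the initial black root weight leaves the induced pebbling on every principal subtree byte-for-byte unchanged, and it is only the subtree quantities ($sw_\pi$, $w.sw_\pi$) that occur in the IH's hypotheses and conclusions. One small imprecision: simply \emph{deleting} a move (i) at the root can later produce $b_{\pi'}(\textrm{root}) > b_\pi(\textrm{root})$, so an inequality like $b_{\pi'}(\textrm{root}) \le b_\pi(\textrm{root})$ would not be maintained; however your conclusion does not depend on any such inequality, only on the identity of all non-root configurations, and a subsequent move (ii) at the root remains legal in $\pi'$ because its precondition reads only the children (non-root nodes) and its effect on $b(\textrm{root})$ is an arbitrary increase, which can accommodate either starting value. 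Your contrast with Lemmas \ref{flb1} and \ref{flb2} is also accurate: there the target $\troot$ is weight $1$ on the root itself, so initial $b.rw_\pi$ is part of the goal and must be bounded, whereas $\troots$ concerns the subtree roots, which the root's own weight can never reach.
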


\noindent
 {\bf Proof of Lemma \ref{flb1}}\begin{proofL} 

 {\bf Lemma \ref{flb1}} will be used in the induction step since it is possible to leave some pebble weight on one subtree and proceed with the pebbling in the other subtrees.\\





We must reach a time $\troots$, either to add black pebble weight to reach $\troot$ or to remove white pebble weight added to reach $\troot$. Since times $\troots$ exist, $\pi$ is also a {\bf sub-root sub-pebbling}. Thus we will apply the IH at these points denoted $\troots$.\\

\noindent
{\bf Case 1 :} $\exists \troots, \troots \leq \troot$. 

By IH with $\epsilon=0.5-E$, since by assumption $sw_\pi(t) \leq min_h-0.5+E$ for $t \leq \troot$ and $b.sw_{\pi}(0) \leq 0.5+E$, then at some time $\tbss=\tbs$, $sw_\pi(\tbss ) \geq min_h+0.5-E$ and $w.w_\pi(t) \geq  1-E$ for t in [$\troots$, $\tbss$]. Also, $1-E \geq 1-2E$ since $E \geq 0$. 

Since $min_h+0.5-E > min_h-0.5+E$ for all allowed E, we have not been allotted enough pebbles before $\troot$ and $\troot < \tbss$ . 

Thus the conditions of the lemma are satisfied.\\

\noindent
{\bf Case 2 :} $\forall \troots, \troot < \troots$. Then, to reach $\troot$ we must use white pebble weight. Since $b.rw_{\pi}(0) \leq 2E$, $w.rw_\pi(\troot) \geq 1-2E$. We must then reach a $\troots$ to remove this white pebble weight. Let $\trootsFirst$ be the first such $\troots$. Thus,
\begin{equation}w.rw_\pi(t) \geq 1-2E\textrm{ for t in }[\troot, \trootsFirst] \label{case2l1whiteC}\end{equation}

\noindent
{\bf Case 2-A :} $\exists t, t \in (\troot, \trootsFirst]$ and $sw_\pi(t) \geq min_h - 0.5+E$

Choose $\tbss$ to be the first such $t$. Then $w_\pi(\tbss) \geq min_h + 0.5-E$ and $w.w_\pi(t) \geq 1-2E$ for times t in [$\troot$, $\tbss$] since we have yet to remove the white pebble weight on the root (\ref{case2l1whiteC}). Thus the lemma is satisfied in this case.\\

\noindent
{\bf Case 2-B :} $\forall t, if~t~\in (\troot, \trootsFirst]$ then $sw_\pi(t) < min_h - 0.5+E$

Then $sw_\pi(t) \leq min_h-0.5+E$ for t in [0, $\trootsFirst$]. By IH with $\epsilon=0.5-E$, we have some time $\tbs$ $>$ $\trootsFirst$ such that $sw_\pi(\tbs) \geq min_h + 0.5-E$ and $w.w_\pi(t) \geq 1-E$ for t in [$\trootsFirst$,  $\tbs$]. We choose $\tbss = \tbs$. 

$w.w_\pi(t) \geq 1-2E$ for times t in [$\troot$, $\trootsFirst$] (\ref{case2l1whiteC}). Thus, $w.w_\pi(t) \geq 1-2E$ for t in $[\troot, \tbss]$. Thus, all conditions are met and the lemma is satisfied in this case.\\

Thus {\bf Lemma \ref{flb1}} is satisfied in all cases.

\end{proofL}\\

\noindent
 {\bf Proof of Lemma \ref{flb2}}\begin{proofL} 
 
Lemma \ref{flb2} is to be used in the induction step when we increase the pebble weight on the root of the subtrees.\\

We must reach a time $\troots$, either to add black pebble weight to reach $\troot$ or to remove white pebble weight added to reach $\troot$. Since these times exist, $\pi$ is also a {\bf sub-root sub-pebbling}. Thus we will apply the IH at these times denoted $\troots$.\\

\noindent
{\bf Case 1 :} $\troots$ $\leq$ $\troot$ $<$ $\tbs$ for some $\troots$ and corresponding $\tbs$.

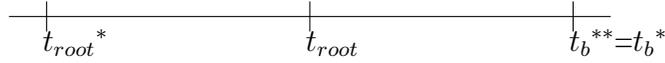
\begin{figure}[H]
  \centering

\begin{tikzpicture}

\draw [-] (0,0) -- (14,0)[color=white] node [below] {};
\draw [-] (4,0) -- (12,0) node [below] {};
\foreach \x in {4.5, 8, 11.5} 
\draw (\x, 0.2) -- (\x,-0.2) node[anchor=north] {}; 
    \node[text width=4 cm,text ragged, text height = 5, anchor=west]  at (4.3,-0.4) {$\troots$};
    \node[text width=4 cm,text ragged, text height = 5, anchor=west]  at (7.8,-0.4) {$\troot$};
    \node[text width=4 cm,text ragged, text height = 5, anchor=west]  at (11.3,-0.4) {$\tbss$=$\tbs$};
\end{tikzpicture}

\caption{Timeline for Case 1. In this case we reach $\troots$ before $\troot$ and do not reach the corresponding $\tbs$ until after $\troot$}
\end{figure}

By IH, taking $\epsilon$ to be $0.5-E$, taking $\tbss$ = $\tbs$, since $sw_{\pi}$(t) $\leq$ $min_h -0.5 + E$ for $t \leq \troots$ and $b.sw_{\pi}(0) \leq 0.5 + E$, then $sw_{\pi}$($\tbss$) $\geq$ $min_h + 0.5-E$ and $w.w_\pi(t) \geq 1-E$ for t in $[\troots, \tbss]$. By assumption we also have $\troot$ $<$ $\tbss$. Thus in this case the lemma is satisfied.\\

\noindent
{\bf Case 2 :} $\forall$ $\troots$, $\troot$ $<$ $\troots$.

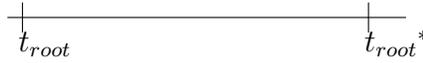
\begin{figure}[H]
  \centering
\begin{tikzpicture}

\draw [-] (0,0) -- (14,0)[color=white] node [below] {};
\draw [-] (4,0) -- (9.3,0) node [below] {};
\foreach \x in {4.2,8.8} 
\draw (\x, 0.2) -- (\x,-0.2) node[anchor=north] {}; 
    \node[text width=4 cm,text ragged, text height = 5, anchor=west]  at (4,-0.4) {$\troot$};
    \node[text width=4 cm,text ragged, text height = 5, anchor=west]  at (8.6,-0.4) {$\troots$};
\end{tikzpicture}

\caption{Setup for Case 2.}
\end{figure}

Then we use white pebble weight to reach $\troot$,
\begin{equation}w.rw_{\pi}(t)=1-E\textrm{ for t in }[\troot, \trootsFirst]  \label{case3whiteC}\end{equation}

Let $\trootsFirst$ be the first $\troots$. \\

\noindent
{\bf Case 2-A :} $\exists$ t, t $\in$ ($\troot$, $\trootsFirst$] and $sw_{\pi}$(t) $\geq$ $min_h-0.5$ 

We let $\tbss$ be such a time t. Then we meet the criteria in the lemma since we have $w_\pi(\tbss) \geq min_h+0.5-E$ and $w.w_\pi(t) \geq 1-E$ for t in $[\troot, \tbss]$ (\ref{case3whiteC}). Thus the lemma is satisfied in this case.\\

\noindent
{\bf Case 2-B :} $\forall$ t, if t $\in$ ($\troot$, $\trootsFirst$] then $sw_{\pi}$(t) $<$ $min_h-0.5$

$min_h-0.5$ $\leq$ $min_h-0.5 + E$ for all allowed $E$. We have used $sw_{\pi}$(t) $\leq$ $min_h-0.5 + E$ for t in [0, $\trootsFirst$]. By the IH, taking $\epsilon$ to be $0.5-E$,  letting $\tbss$ = $\tbs$, we must use $sw_{\pi}$($\tbss$) $\geq$ $min_h+0.5-E$ at $\tbss$ $>$ $\trootsFirst$. 

Also by the IH $w.w_{\pi}$(t) $\geq$ $1-E$ for t in [$\trootsFirst$, $\tbss $]. $w.rw_{\pi}$(t) $\geq$ 1-E for t in [$\troot$, $\trootsFirst$] (\ref{case3whiteC}), thus $w.w_{\pi}$(t) $\geq$ $1-E$ for t in [$\troot$, $\tbss$]. Thus the lemma is satisfied in this case.\\

\begin{figure}[H]
  \centering
\begin{tikzpicture}

\draw [-] (0,0) -- (12,0)[color=white] node [below] {};
\draw [-] (4,0) -- (14,0) node [below] {};
\foreach \x in {6.5,10, 13.5} 
\draw (\x, 0.2) -- (\x,-0.2) node[anchor=north] {}; 
    \node[text width=4 cm,text ragged, text height = 5, anchor=west]  at (6.3,-0.4) {$\troot$};
    \node[text width=4 cm,text ragged, text height = 5, anchor=west]  at (9.8,-0.4) {$\troots$};
    \node[text width=4 cm,text ragged, text height = 5, anchor=west]  at (13.3,-0.4) {$\tbs$ = $\tbss$};
    
\draw [-] (4,-1) -- (10,-1) node [below] {};
\foreach \x in {4,10} 
\draw (\x, -0.8) -- (\x,-1.2) node[anchor=north] {}; 
    \node[text width=4 cm,text ragged, text height = 5, anchor=west]  at (6,-1.1) {no $\troots$};
    
\draw [-] (6.5,-2) -- (10,-2) node [below] {};
\foreach \x in {6.5, 10} 
\draw (\x, -1.8) -- (\x,-2.2) node[anchor=north] {}; 
    \node[text width=4 cm,text ragged, text height = 5, anchor=west]  at (6.5,-2.2) {$w.rw_{\pi}$(t) $\geq$ $1-E$};
\end{tikzpicture}

\caption{Timeline for Case 2. As mentioned, 1-E pebble weight is on the root between $\troot$ and $\troots$.}
\end{figure}
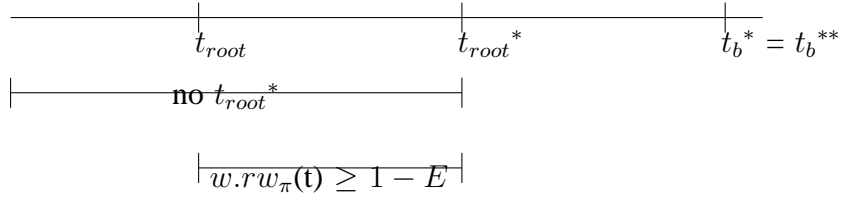

\noindent
{\bf Case 3 :} $\troots$ $<$ $\tbs$ $\leq$ $\troot$ for the last $\troots$ and corresponding $\tbs$ before $\troot$.\\

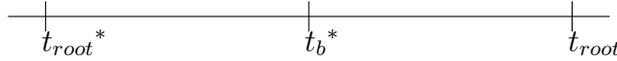
\begin{figure}[H]
  \centering
\begin{tikzpicture}

\draw [-] (0,0) -- (12,0)[color=white] node [below] {};
\draw [-] (4,0) -- (12,0) node [below] {};
\foreach \x in {4.5,8, 11.5} 
\draw (\x, 0.2) -- (\x,-0.2) node[anchor=north] {}; 
    \node[text width=4 cm,text ragged, text height = 5, anchor=west]  at (4.3,-0.4) {$\troots$};
    \node[text width=4 cm,text ragged, text height = 5, anchor=west]  at (7.8,-0.4) {$\tbs$};
    \node[text width=4 cm,text ragged, text height = 5, anchor=west]  at (11.3,-0.4) {$\troot$};
\end{tikzpicture}
\caption{Setup for Case 3.}
\end{figure}

\noindent
{\bf Case 3-A :} $E < 0.5$. By IH, taking $\epsilon$ to be $0.5-E$, since $sw_{\pi}$(t) $\leq$ $min_h -0.5 + E$ for $t \leq \troots$ and $b.sw_{\pi}(0) \leq 0.5 + E$, then $sw_{\pi}$($\tbs$) $\geq$ $min_h + 0.5-E$. However, $min_h+0.5-E > min_h-0.5 + E$. Thus we have not been allotted enough pebble weight before $\troot$ and we must proceed past $\troot$ before we may reach $\tbs$. Thus when $0.5 > E$, {\bf Case 3} is not possible. \\

\noindent
{\bf Case 3-B :} $E \geq 0.5$.

By IH, taking $\epsilon$ to be $0.5-E$, we must have a $\tbs$ such that $sw_{\pi}$($\tbs$) $\geq$ $min_h + 0.5-E$. 

At this time, $b.rw_{\pi}$($\tbs$) $\leq$ $2E-1$ $<$ 1 due to the restriction on total pebble weight before $\troot$. Since the chosen $\troots$ was the last before $\troot$ we must use white pebble weight to reach $\troot$, $w.rw_{\pi}(\troot) \geq 2-2E$.

Since this is not 0 we will need to reach another $\troots$ after $\troot$ to remove this white pebble weight. Since 2-2E $\geq$ 1-E, this case follows by the same argument in {\bf Case 2-A} and {\bf Case 2-B} .\\

Thus in all cases Lemma \ref{flb2} follows from IH.

\end{proofL}\\

\noindent
{\bf Induction step :} We prove the induction hypothesis for h+1 assuming it for h',
$3\le h' \le h$.

Fix $\pi =$ 0, ..., $\troots$, ... to be a {\bf sub-root sub-pebbling} of $T^{h+1}_d$ with $\troots$ such that $rw_\pi$($\troots$)=1 for all principal subtrees, and with
 \begin{equation}sw_{\pi}(t) \leq min_{h+1} - \epsilon = (d-1)(h+1)/2 + 1- \epsilon=  min_h +  (d-1)/2 - \epsilon~for~t~in~[0, \troots]\label{rtotal}\end{equation} 

\noindent
Further, we assume,

\begin{equation}\epsilon \in (-0.5,0.5]\label{aeps}\end{equation} 
\begin{equation}b.sw_{\pi}(0) \leq 1-\epsilon\label{swinit}\end{equation}

Let $P_i$ be the principal subtrees of $T^{h+1}_d$.
The restriction of $\pi$ to each of these subtrees is a valid
pebbling of that subtree.\\

\noindent
{\bf Case 1 :} $\forall t$, $\forall i$, if $t \leq \troots$ then $sw_{\pi}$(t)[$P_i$] $<$ $min_h-0.5$ 

For each principal subtree we will apply {\bf Lemma \ref{flb1}}. We will show that if we consider all subtrees this implies the desired bounds.

In this case, the subtree pebble weight of all subtrees  $P_i$ is less than $min_h-0.5$.

We have at most $1 - \epsilon$ initial black pebble weight in the $P_i$ by assumption ($\ref{swinit}$). We will separate this pebble weight between the subtrees and apply {\bf Lemma \ref{flb1}} to each subtree. Let us have $b.w_{\pi}$(0)[$P_i$] = 2$E_i$. We choose to express the amount this way since it resemble amounts expressed in {\bf Lemma \ref{flb1}}.\\

It is the case that $E_i$ $\geq$ 0 since pebble weight is non-negative.

If 0 ${\leq}$ $E_i$ ${<}$ 0.5 we may apply {\bf Lemma \ref{flb1}} to the $i^{th}$ subtree. Let G be the set of all $i$ such that 0 ${\leq}$ $E_i$ ${<}$ 0.5. We have $\Sigma_{i \in G} 1 - 2E_i \geq \Sigma_{i=1}^d 1- 2E_i$ since $0 \geq 1 - 2E_i$ for i $\notin$ G. 

The way in which we will use G will affirm that maintaining more than 1 black pebble weight in any tree is useless.

Note, G is not the empty set since $b.sw_{\pi}(0) \leq 1-\epsilon$ and d $\geq$ 2.\\

\noindent 
Note,\\
$\Sigma_{i=1}^d 2E_i \leq 1-\epsilon$, by construction,\\
$- \Sigma_{i=1}^d 2E_i  \geq - 1 + \epsilon$, then,\\
$\Sigma_{i=1}^d 1 - 2E_i  \geq d - 1 + \epsilon$, then,\\
\begin{equation}\Sigma_{i \in G} (1 - 2E_i) \geq d - 1 + \epsilon \label{f1case0IHWC}\end{equation}

For each subtree, we take $\troot$ in the lemma to be the time $\troots$. This is possible since $rw_\pi$($\troots$)[$P_i$]=1 as required by {\bf Lemma \ref{flb1}}.\\

We apply {\bf Lemma \ref{flb1}} to $P_i$, i $\in$ G,  taking E in the lemma to be ${E_i}$ and with $\tbith$ := $\tbss$. Then, $\tbith $ $>$ $\troots$, $w_{\pi}$($\tbith$)[$P_i$] $\geq$ $min_h+0.5-E_i$ and $w.w_\pi(t)$[$P_i$] $\geq 1-2E_i$ for t in [$\troots$, $\tbith$]. \\

We let $\tbs$=min($\tbith$) for $i$ $\in$ G.

We define $first$ to be this $i$. It is the first $\tbith$ we reach in $\pi$. Then we require $min_h+0.5-E_{first}$ in $P_{first}$ while maintaining at least 1-2$E_i$ in the remaining $P_i$, $i$ $\in$ G and $i$ $\ne$ $first$. Then,\\
$sw_{\pi}$($\tbs$) $\geq$ $min_h+0.5-E_{first}+\Sigma_{i \in G, i \ne first} (1-2E_i)$\\
$\geq min_h+0.5-2E_{first}+\Sigma_{i \in G, i \ne first} (1-2E_i)$  (since, $0 \geq -E_{first}$)\\
= $min_h-0.5+\Sigma_{i \in G} (1-2E_i)$\\
$\geq min_h-0.5+d-1+\epsilon$ (by \ref{f1case0IHWC})\\
$\geq$ $min_h-(d-1)/2+(d-1)+\epsilon$ (since $d \geq 2$)\\
= $min_h+(d-1)/2+\epsilon$\\
= $min_{h+1}+\epsilon$

Additionally, \\$w.sw_\pi(t) \geq$ $\Sigma_{i \in G} (1 - 2E_i) \geq d - 1 + \epsilon \geq 1 + \epsilon$ for t in $[\troots, \tbs]$ (by \ref{f1case0IHWC}). 

Thus the IH is satisfied in {\bf Case 1}.\\\\

\noindent
{\bf Case 2 :}  $\exists t$, $\exists i$, $t \le \troots$ and $sw_{\pi}(t)[P_i] \geq min_h-0.5$.

For each principal subtree we will try to apply one of the lemmas. We will then show that taken together this results in the desired bounds. Also recall that we fixed $\pi$ = 0, ..., $\troots$, ... .\\

Suppose $sw_{\pi}$(t) $\geq$ $min_h-0.5$ for the last time before $\troots$ in the subtree $P_{last}$. Let this time be $t_{last}$. Then $t_{last}$ $\leq$ $\troots$ and 
\begin{equation} sw_{\pi}(t_{last})[P_{last}] \geq min_h-0.5\label{swp2tlast}\end{equation}

For any value $r_i$, for all $i$ $\ne$ $last$, define $t_{r_i}$ to be the last time in [0, $\troots$] such that  $sw_{\pi}$($t_{r_i}$)[$P_i$]  $\geq$ $min_h-0.5+r_i$ or the initial time if no such time exists.

Define $R_i$ to be the max $r_i$ such that $w_{\pi}$(t)[$P_i$] $\geq$ 2${r_i}$ for times t in [$t_{r_i}$, $\troots$].\\

There is always a time $\troots$ since $\pi$ is a {\bf sub-root sub-pebbling}. The described condition is true for some value of $r_i$ as it is true for $r_i$ = 0 and this is the smallest value possible. There is therefore always a time $t_{R_i}$ for each principal subtree. Thus,
\begin{equation} {R_i} \geq 0\label{DboundsC1A}\end{equation}

By definition of $t_{R_i}$ and $t_{last}$, 
\begin{equation} t_{R_i} < t_{last}\label{tdtlast}\end{equation}


This is a result of the restriction on total pebble weight (\ref{rtotal}) and having at least $min_h-0.5$ pebble weight in $P_{last}$ at $t_{last}$. We show that we must have less pebble weight than $min_h-0.5$ in the other subtrees at $t_{last}$. Suppose we did not, we then have at least $2min_h -1$ total pebble weight.\\

\noindent
$sw_{\pi}(t) \geq 2min_h -1$\\
$= min_h +(d-1)h/2 + 1 - 1$\\
$= min_h +(d-1)h/2$\\
$> min_h +(d-1)$   (Since $h > 2$)\\
$\geq  min_h +(d-1)/2 - \epsilon$\\
$= min_{h+1}- \epsilon$

This would contradict the assumption for total subtree pebble weight (\ref{rtotal}). Thus $t_{last}$ is the last time in $\pi$ we use the amount described at $t_{R_i}$ and (\ref{tdtlast}) holds.\\

In summary, the choice of ${R_i}$ implies the following, 

\begin{equation}sw_{\pi}(t_{R_i})[P_i]  \geq min_h-0.5+{R_i}~or~t_{R_i} = 0\label{tdamt}\end{equation}
\begin{equation}w_{\pi}(t)[P_i] \geq 2{R_i}~for~t~in~[t_{R_i}, \troots]\label{tdres}\end{equation}

\begin{definition}\label{ddr2}

For each $i \ne last$, define $t_{Pi-init}$ to be a time such that $w_{\pi}$($t_{Pi-init}$)[$P_i$] $\leq$ 2$R_i$ and $sw_{\pi}$(t)[$P_i$] $\leq$ $min_h-0.5+R_i$ fot t in [$t_{Pi-init}$, $\troots$]. 
\end{definition}

This will be useful since we wish to apply {\bf Lemma \ref{flb1}} to $P_i$ later with E = $R_i$ and initial time $t_{Pi-init}$. We show such a time always exists.

\noindent
{\bf Case I} : $w_{\pi}$(t)[$P_i$] = 2$R_i$ for some t in [$t_{R_i}$, $\troots$]. We let this time be $t_{Pi-init}$.

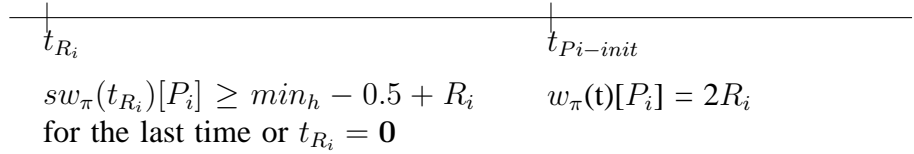
\begin{figure}[H]
  \centering
\begin{tikzpicture}
\draw [-] (0,0) -- (12,0) node [below] {};
\foreach \x in {0.5,7.2} 
\draw (\x, 0.2) -- (\x,-0.2) node[anchor=north] {};

    \node[text width=6 cm,text ragged, text height = 5, anchor=west]  at (0.3,-0.4) {$t_{R_i}$};
    \node[text width=6 cm,text ragged, text height = 5, anchor=west]  at (7,-0.4) {$t_{Pi-init}$};
    \node[text width=6 cm, text ragged, text height = 5, text depth = 25, anchor=west] at (0.3,-1.5){$sw_{\pi}(t_{R_i})[P_i]  \geq min_h-0.5+{R_i}$ for the last time or $t_{R_i} = {\bf 0}$}; 
    \node[text width=6 cm, text ragged,  text height = 6, text depth = 25, anchor=west] at (7,-1.5) {$w_{\pi}$(t)[$P_i$] = 2$R_i$};
    
\end{tikzpicture}
\caption{Depicts the situation in $P_i$ for {\bf Case I}.}
\end{figure}

\noindent
{\bf Case II} :  $w_{\pi}$(t)[$P_i$] $>$ 2$R_i$ for all times t in [$t_{R_i}$, $\troots$].

Then $sw_{\pi}$($t_{R_i}$)[$P_i$] = $min_h-0.5+R_i$. If this was not the case, the conditions would be true for a greater value of ${R_i}$ and we would have a contradiction. For similar reasons, $t_{R_i}$ is not the initial time else the condition would be true for a larger value of $R_i$.

Let $t_{before-R_i}$ be the last time such that $sw_{\pi}$($t_{before-R_i}$)[$P_i$] $>$ $min_h-0.5+R_i$ or the initial time if no such time exists. Then $t_{before-R_i}$ $<$ $t_{R_i}$. There must have been a time, $t_{Pi-init}$,  in [$t_{before-R_i}$, $t_{R_i}$] such that $w_{\pi}$($t_{Pi-init}$)[$P_i$] $\leq$ 2$R_i$. If this were not the case, the conditions would be true for a greater value of ${R_i}$ since we would have $w_{\pi}$(t)[$P_i$] $>$ 2${R_i}$ for t in [$t_{before-R_i}$, $\troots$] using the assumption in {\bf Case II}. Thus, the chosen $t_{Pi-init}$ satisfies the necessary conditions.


\begin{figure}[H]
  \centering
\begin{tikzpicture}
\draw [-] (0,0) -- (13,0) node [below] {};
\foreach \x in {0.5,5.2, 9.9} 
\draw (\x, 0.2) -- (\x,-0.2) node[anchor=north] {};

    \node[text width=4 cm,text ragged, text height = 5, anchor=west]  at (0.3,-0.4) {$t_{before-R_i}$};
    \node[text width=4 cm,text ragged, text height = 5, anchor=west]  at (5,-0.4) {$t_{Pi-init}$};
    \node[text width=4 cm,text ragged, text height = 5, anchor=west]  at (9.7,-0.4) {$t_{R_i}$};
    \node[text width=4.7 cm, text ragged, text height = 5, text depth = 25, anchor=west] at (0.3,-1.5) {$sw_{\pi}$($t_{before-R_i}$)[$P_i$] $>$ $min_h-0.5+R_i$ for the last time or $t_{before-R_i} = {\bf 0}$}; 
    \node[text width=4 cm, text ragged,  text height = 5, text depth = 25, anchor=west] at (5,-1.5) {$w_{\pi}$($t_{Pi-init}$)[$P_i$] $\leq$ 2$R_i$};
    \node[text width=4 cm, text ragged,  text height = 5, text depth = 25, anchor=west] at (9.7,-1.5) {$sw_{\pi}$($t_{R_i}$)[$P_i$] = $min_h-0.5+R_i$};
    
\end{tikzpicture}
\caption{Depicts the situation in $P_i$ for {\bf Case II}.}
\end{figure}
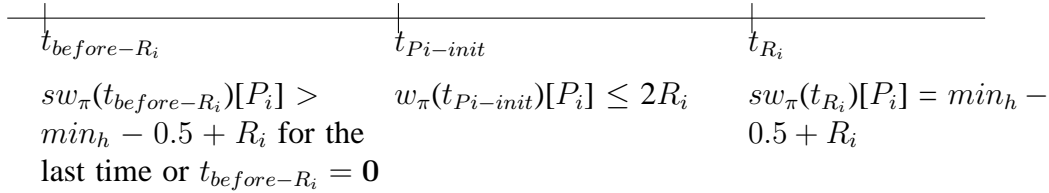

Thus in all cases, such a $t_{Pi-init}$ exists.\\\\

Let G be the set of all $i$ such that 0 ${\leq}$ $R_i$ ${<}$ 0.5, $i$ $\ne$ $last$. Since $2R_i \geq 1$ for $i \notin G$,
\begin{equation} \Sigma_{i=1, i \ne last}^d 2R_i \geq  (d-1-|G|) + \Sigma_{i \in G} 2R_i\label{GtonotG}\end{equation}
We will apply {\bf Lemma \ref{flb1}} to $P_i$ for $i$ $\in$ G, taking the initial time in the lemma to be $t_{Pi-init}$ and taking E in the lemma to be ${R_i}$.

We use $sw_{\pi}(t_{last})[P_{last}] \geq min_h-0.5$ (\ref{swp2tlast}) while maintaining $\Sigma_{i=1, i \ne last}^d 2R_i$ in the other $P_i$ at time $t_{last}$ ($\ref{tdamt}$). Thus, $min_h-0.5 + \Sigma_{i=1, i \ne last}^d 2R_i$ $\leq$ $min_h+(d-1)/2 -\epsilon$ due to the restriction on total pebble weight (\ref{rtotal}). Then
$(d-1)/2 -\epsilon +0.5- \Sigma_{i=1, i \ne last}^d 2R_i$ is the maximum amount of pebble weight at $t_{last}$ on the root of $P_{last}$. It is the difference between the maximum pebble weight and the pebble weight elsewhere.

It is also the case that,\\
$(d-1)/2 -\epsilon +0.5- \Sigma_{i=1, i \ne last}^d 2R_i$ $\leq 0.5 + (d-1)/2 - \epsilon - (d-1-|G|) - \Sigma_{ i\in G} 2R_i$ (by \ref{GtonotG})\\
$= 0.5 + (d-1)/2 - \epsilon - (d-1) + |G| - \Sigma_{ i\in G} 2R_i$ \\
$= 0.5 - (d-1)/2 - \epsilon + |G| - \Sigma_{ i\in G} 2R_i$\\
We denote this quantity $Rmax$.
\begin{equation} Rmax = 0.5 - (d-1)/2 - \epsilon + |G| - \Sigma_{ i\in G} 2R_i \label{RMAX}\end{equation}

Thus $Rmax$ is an upper bound on the maximum amount of pebble weight at $t_{last}$ on the root of $P_{last}$. It is a measure dependent on the pebble weight maintained in the other subtrees.\\

\noindent
{\bf Case 2A :} $Rmax \geq 1$.

Note by assumption for {\bf Case 2A} and (\ref{RMAX}),

\begin{equation} -\Sigma_{i \in G} 2R_i \geq 0.5 + (d-1)/2 + \epsilon - |G| \label{rearge}\end{equation}

This will be used later in this {\bf Case}.\\

In this case we have not left enough pebble weight in the $P_i$, $i \ne last$.

Also in this case G is not the empty set. For contradiction, suppose it was. Then, 
$Rmax = 0.5 - (d-1)/2 - \epsilon + |G| - \Sigma_{ i\in G} 2R_i = 0.5 - (d-1)/2 - \epsilon \geq 1$. However, this is not possible since d $\geq$ 2 and $\epsilon \in (-0.5,0.5]$ (\ref{aeps}).

If 0 ${\leq}$ $R_i$ ${<}$ 0.5 we may apply {\bf Lemma \ref{flb1}} to the $i^{th}$ subtree at $t_{Pi-init}$ .

Thus, we apply {\bf Lemma \ref{flb1}} to $P_i$, $i$ $\in$ G, taking the initial time in the lemma to be $t_{Pi-init}$, taking E in the lemma to be ${R_i}$ and with $\tbith$ := $\tbss$ from the lemma. Then, $\tbith$ $>$ $\troots$, $w_{\pi}$($\tbith$)[$P_i$] $\geq$ $min_h + 0.5 - {R_i}$ and $w.w_\pi(t)[P_i] \geq 1 - 2{R_i}$ for t in [$\troots$, $\tbith$]. \\

We choose $\tbs$=min($\tbith$), $i$ $\in$ G. This is the first $\tbith$ which is reached in $\pi$. Let this $i=first$. Then we add $\Sigma_{i \in G, i \ne first} (1 - 2R_i)$ since we had yet to remove the pebble weight from the other $P_i$, $i \in G$,

\noindent
$sw_\pi(\tbs) \geq min_h+0.5-D_{first}+ \Sigma_{i \in G, i \ne first} (1 - 2R_i)$\\
$\geq min_h+0.5-2D_{first}+ \Sigma_{i \in G, i \ne first} (1 - 2R_i)$\\
$\geq min_h-0.5+ \Sigma_{i \in G} (1 - 2R_i)$\\
$= min_h-0.5+ |G| - \Sigma_{i \in G} 2R_i$\\
$\geq min_h - 0.5+ |G| + 0.5 + (d-1)/2 + \epsilon - |G|$ by (\ref{rearge}). \\
$= min_h + (d-1)/2 + \epsilon$\\
$= min_{h+1} + \epsilon$

Thus we exceed or match the minimum pebble weight allotted by the IH.\\

Also, we have $w.sw_\pi(t) \geq \Sigma_{i \in G} (1 - 2R_i)$ for $t$ in $[\troots, \tbs]$ since we have yet to remove the weight from any of the $P_{i}$. 

\noindent
$w.sw_\pi(t) \geq \Sigma_{i \in G} (1 - 2R_i)$\\
= $|G| - \Sigma_{i \in G} 2R_i$\\
$\geq |G| + 0.5 + (d-1)/2 + \epsilon - |G|$ by (\ref{rearge}).\\
=  $0.5 + (d-1)/2 + \epsilon$\\
$> 0.5 + \epsilon$ as required.\\

Thus in this case the IH is satisfied.\\

\noindent
{\bf Case 2B :} $Rmax < 1$ (\ref{RMAX})

Let $t_{D-before-last}$ be the last of the $t_{R_i}$ (see \ref{tdamt} and \ref{tdres}). If all $t_{R_i}$ are the initial time, choose any one arbitrarily as $t_{D-before-last}$. Let $P_{before-last}$ be the subtree associated with $t_{D-before-last}$ in the definition.

We wish to eventually apply {\bf Lemma \ref{flb2}} to $P_{last}$ for E = $Rmax$. To do this we take $t_{D-before-last}$ to be the initial time and $t_{last}$ to be the time $t_0$ in the lemma. To apply  {\bf Lemma \ref{flb2}}, we must show upper bounds on $b.sw_{\pi}$($t_{D-before-last}$)[$P_{last}$], $w_{\pi}(t)[P_{last}]$ for t in $[t_{D-before-last}, \troots]$, $b.rw_\pi(t_{last})[P_{last}]$ and we must show $Rmax$ $\in$ [0,1).\\

We first show $b.sw_{\pi}$($t_{D-before-last}$)[$P_{last}$] $\leq$ $0.5 + Rmax$. This is divided into cases.

\noindent
{\bf Case I :} $t_{D-before-last}$ was the initial time

If $t_{D-before-last}$ was the initial time, due to the restriction on initial black pebble weight (\ref{swinit}) and due to the pebble weight in the other subtrees ($\ref{tdamt}$),\\
$b.sw_{\pi}$($t_{D-before-last}$)[$P_{last}$] $\leq$ $1 - \epsilon - \Sigma_{i=1, i \ne last}^d 2R_i$\\
$\leq$ $1 - \epsilon - (d-1-|G|) - \Sigma_{i \in G} 2R_i$ (by \ref{GtonotG})\\
$\leq$ $0.5 + (d-1)/2  - \epsilon - (d-1-|G|) - \Sigma_{i \in G} 2R_i$ $(d \geq 2)$\\
$=$ $0.5 - (d-1)/2  - \epsilon + |G| - \Sigma_{i \in G} 2R_i$\\
$=$ $Rmax$ (by \ref{RMAX})\\
$\leq$ $0.5 + Rmax$ as required.

\medskip

\noindent
{\bf Case II :} $t_{D-before-last}$ was not the initial time

If $t_{D-before-last}$ was not the initial time, due to the restrictions on total pebble weight (\ref{rtotal}), the amount in $P_{before-last}$ ($\ref{tdamt}$) and the pebble weight in the other subtrees,\\
$b.sw_{\pi}$($t_{D-before-last}$)[$P_{last}$] $\leq$ $(d-1)/2 - \epsilon - D_{before-last} + 0.5 - \Sigma_{i=1, i \ne last , i \ne before-last}^d 2R_i$

\medskip

\noindent
{\bf Case IIA :}  $before-last$ is in G, therefore $D_{before-last} < 0.5$. 

There are $(d-1-|G|)$ other subtrees not in G since ${before-last}$ is in G. Thus if we continue from the above,\\
$b.sw_{\pi}$($t_{D-before-last}$)[$P_{last}$] $\leq$ $(d-1)/2 - \epsilon - D_{before-last} + 0.5 - (d-1-|G|) - \Sigma_{i \in G , i \ne before-last} 2R_i$ (similar to \ref{GtonotG}) \\
$\leq$ $(d-1)/2 - \epsilon - 2D_{before-last} + 1 - (d-1-|G|) - \Sigma_{i \in G , i \ne before-last} 2R_i$ \\
$=$ $(d-1)/2 - \epsilon + 1 - (d-1-|G|) - \Sigma_{i \in G} 2R_i$ \\
$=$ $1 - (d-1)/2  - \epsilon + |G| - \Sigma_{i \in G} 2R_i$\\
$=$ 0.5 + $Rmax$ (by \ref{RMAX}) as required.

\medskip

\noindent
{\bf Case IIB :} $before-last$ is not in G, therefore $D_{before-last} \geq 0.5$.

There are $(d-2-|G|)$ subtrees not in G other than ${before-last}$, since ${before-last}$ is not in G. Thus if we continue from what was described at the beginning of {\bf Case II},\\ 
$b.sw_{\pi}$($t_{D-before-last}$)[$P_{last}$] $\leq$ $(d-1)/2 - \epsilon - D_{before-last} + 0.5 - \Sigma_{i \in G} 2R_i - (d-2-|G|)$ (similar to \ref{GtonotG})\\
$\leq$ $(d-1)/2 - \epsilon - 0.5 + 0.5 - \Sigma_{i \in G} 2R_i - (d-2-|G|)$ \\
$=$ $(d-1)/2 - \epsilon - \Sigma_{i \in G} 2R_i - (d-2-|G|)$ \\
$=$ $(d-1)/2 - \epsilon - \Sigma_{i \in G} 2R_i - (d-2-|G|+1-1)$ \\
$=$ $(d-1)/2 - \epsilon - \Sigma_{i \in G} 2R_i - (d-1-|G|) + 1$ \\
$=$ $1 - (d-1)/2  - \epsilon + |G| - \Sigma_{i \in G} 2R_i$\\
$=$ 0.5 + $Rmax$ (by \ref{RMAX}) as required.\\ 
Thus in all cases the condition is met for the $b.sw_{\pi}$.\\

We next show $w_{\pi}(t)[P_{last}] \leq min_h-0.5+ Rmax$ for t in $[t_{D-before-last}, \troots]$. We use at most $w_{\pi}(t)[P_{last}]$ $\leq$ $min_h + (d - 1)/2 - \epsilon - \Sigma_{i=1, i \ne last}^d 2R_i$ for t in $[t_{D-before-last}, \troots]$ due to the pebble weight elsewhere ($\ref{tdres}$) and the restriction on total pebble weight before $\troots$ (\ref{rtotal}).\\
$w_{\pi}(t)[P_{last}]$ $\leq$ $min_h+(d-1)/2 - \epsilon - \Sigma_{i=1, i \ne last}^d 2R_i$\\
$\leq$ $min_h+(d-1)/2 - \epsilon - \Sigma_{i \in G}^d 2R_i - (d-1-|G|)$ (by \ref{GtonotG})\\
$=$ $min_h-(d-1)/2  - \epsilon + |G| - \Sigma_{i \in G} 2R_i$\\
$=$ $min_h-0.5+ Rmax$ (by \ref{RMAX}) as required.\\

We know by construction $b.rw_\pi(t_{last})[P_{last}] \leq Rmax$.\\

Finally we show $Rmax$ $\in$ [0,1). We use $sw_{\pi}(t_{last})[P_{last}] \geq min_h-0.5$ (\ref{swp2tlast}) while maintaining $\Sigma_{i=1, i \ne last}^d 2R_i$ in the other subtrees at time $t_{last}$ ($\ref{tdres}$). Thus $min_h-0.5 + \Sigma_{i=1, i \ne last}^d 2R_i$ $\leq$ $min_h+(d-1)/2 -\epsilon$ due to the restriction on total pebble weight (\ref{rtotal}). Then,\\
0 $\leq$ $(d-1)/2 -\epsilon +0.5- \Sigma_{i=1, i \ne last}^d 2R_i$ \\
$\leq 0.5 + (d-1)/2 - \epsilon - (d-1-|G|) - \Sigma_{ i\in G} 2R_i$  (by \ref{GtonotG})\\
= $Rmax$\\
Using this and the assumption, $Rmax$ $\in$ [0,1), as required.\\

Thus we have shown all the necessary conditions to apply {\bf Lemma \ref{flb2}} to $P_{last}$.

If 0 ${\leq}$ $R_i$ ${<}$ 0.5 we may apply {\bf Lemma \ref{flb1}} to the $i^{th}$ subtree at $t_{Pi-init}$.

Since $\troots$ occurs when $rw_\pi$($\troots$)[$P_{last}$]=1 and $rw_\pi$($\troots$)[$P_i$]=1, we apply {\bf Lemma \ref{flb2}} and {\bf Lemma \ref{flb1}}, respectively, taking $\troots$ as the time $\troot$ in the lemmas.\\

\noindent
We apply {\bf Lemma \ref{flb2}} to $P_{last}$ with $\tblast$ := $\tbss$ from the lemma. Then, $\tblast$ $>$ $\troots$, \\
$w_{\pi}$($\tblast$)[$P_{last}$] $\geq$ $min_h + 0.5- Rmax$\\
$= min_h+0.5- 0.5 + (d-1)/2 + \epsilon - |G| + \Sigma_{ i\in G} 2R_i$ (by \ref{RMAX})\\
$= min_h + (d-1)/2 + \epsilon - |G| + \Sigma_{ i \in G} 2R_i$ \\
and \\
$w.w_\pi(t)[P_{last}] \geq 1 - Rmax$\\
$= 1- 0.5 + (d-1)/2 + \epsilon -|G| + \Sigma_{ i\in G} 2R_i$ (by \ref{RMAX})\\
$= 0.5 + (d-1)/2 + \epsilon - |G| + \Sigma_{ i \in G} 2R_i$ for t in [$\troots$, $\tblast$].\\

\noindent
We apply {\bf Lemma \ref{flb1}} to $P_i$, $i$ $\in$ G, taking the initial time in the lemma to be $t_{Pi-init}$, taking E in the lemma to be ${R_i}$ and with $\tbith$ := $\tbss$ from the lemma. We may do this since $b.sw_{\pi}(0) \leq 2{R_i} \leq 0.5+{R_i}$ and $b.rw_{\pi}(0) \leq 2{R_i}$. Then, $\tbith$ $>$ $\troots$, $w_{\pi}$($\tbith$)[$P_i$] $\geq$ $min_h + 0.5 - {R_i}$ and $w.w_\pi(t)[P_i] \geq 1 - 2{R_i}$ for t in [$\troots$, $\tbith$].\\

\noindent
We choose $\tbs$=min($\tblast$, $\tbith$) for i $\in$ G.\\

\noindent
{\bf Case 2B-1 :} $\tbs$ = $\tblast$. Then,\\
$sw_\pi(\tbs) \geq min_h + (d-1)/2 + \epsilon - |G| + \Sigma_{ i \in G} 2R_i+ \Sigma_{i \in G} (1-2R_i)$\\
$= min_h + (d-1)/2 + \epsilon - |G| + \Sigma_{ i \in G} 2R_i + |G| - \Sigma_{i \in G} 2R_i$\\
$= min_h + (d-1)/2 + \epsilon$\\
$= min_{h+1} + \epsilon$\\
Where we add the pebble weight in the $P_i$s since we had yet to reach the $\tbith$. Thus we exceed or match the minimum pebble weight allotted by the IH.

Also, we have white pebble weight as follows between [$\troots$,$\tbs$],\\
$w.sw_\pi(t) \geq 0.5 + (d-1)/2 + \epsilon - |G| + \Sigma_{ i \in G} 2R_i+ \Sigma_{i \in G} (1 - 2{R_i})$\\
= $0.5 + (d-1)/2 + \epsilon - |G| + \Sigma_{ i \in G} 2R_i+ |G| - \Sigma_{i \in G} 2{R_i}$\\
= $0.5 + (d-1)/2 + \epsilon$\\
$\geq 0.5 + \epsilon$ as required.\\

Thus the IH is satisfied in this case.\\

\noindent
{\bf Case 2B-2 :} 
$\tbs$ = $\tbith$, $i \ne last$. 

We let this $i = first$. Then,\\
$sw_\pi(\tbs) \geq min_h+0.5-D_{first}+ 0.5 + (d-1)/2 + \epsilon - |G| + \Sigma_{ i \in G} 2R_i+  \Sigma_{i \in G, i \ne first} (1 - 2R_i)\\
\geq min_h+1-2D_{first}+ (d-1)/2 + \epsilon - |G| + \Sigma_{ i \in G} 2R_i +  \Sigma_{i \in G, i \ne first} (1 - 2R_i)$\\
$=min_h + (d-1)/2 + \epsilon - |G| + \Sigma_{ i \in G} 2R_i +  \Sigma_{i \in G} (1 - 2R_i)$\\
$= min_h + (d-1)/2 + \epsilon - |G| + \Sigma_{ i \in G} 2R_i + |G| - \Sigma_{i \in G} 2R_i$\\
$= min_h + (d-1)/2 + \epsilon$\\
$= min_{h+1} + \epsilon$\\
This matches the lower bounds specified in the IH.

As in {\bf Case 2B-1}, we have the same amount of white pebble weight until this time. Thus the IH is satisfied in this case.\\\\

Thus the IH holds in all cases. Consequently the main theorem holds as well.

\newpage 
\section{Conclusion} 

We have presented a proof of an open problem given in \cite{c:pebjournal}. Fractional pebbles allow for many pebbling strategies. To accommodate for this, we used a $shifting$ argument to build a direct proof. Many open problems remain related to the fractional pebbling game.\\

Branching programs were briefly introduced in the introduction (Section 1). They are nonuniform models of Turing machines. Showing that non-deterministic branching programs require a superpolynomial number of states for a problem in P would separate NL from P. 

\cite{c:pebjournal} proposed the $tree$ $evaluation$ $problem$ as a mean of separating NL from P. The $tree$ $evaluation$ $problem$ is similar to the pebbling game except values are attached to each leaf node and functions are attached to each non-leaf node. The value of a node is determined by the value of its function evaluated at the value of its children. The goal is then to determine the value of the root node. 

One step towards separating NL from P is to show a superpolynomial lower bound on the number of states for a restricted class of branching programs. A thrifty branching program for the $tree$ $evaluation$ $problem$ must query the value of the functions only at the correct value of the children. The thrifty hypothesis states that thrifty branching programs are optimal among all branching programs.

\cite{c:pebjournal}, under the thrifty hypothesis, showed that deterministic branching programs solving the $tree$ $evaluation$ $problem$ required a superpolynomial number of states that would separate L from P. This followed from a proof similar to the one in Section 3.2. Thus we propose the following as an open problem :


\begin{open}
Adapt the proof of the Main Theorem to get lower bounds for non-deterministic thrifty branching programs solving the $tree$ $evaluation$ $problem$.
\end{open}

Showing this would separate NL from P under the thrifty hypothesis. To show their original result, \cite{c:pebjournal} used a non-inductive proof. It seems difficult to instead use an inductive proof, thus the following would be interesting :

\begin{open}
Provide an alternative proof, using induction, that under the thrifty hypothesis, deterministic thrifty branching programs solving the $tree$ $evaluation$ $problem$ require a superpolynomial number of states which would separate L from P. 
\end{open}

If this could be done without the thrifty hypothesis it would be an even more important result. Similarly, showing that the thrifty hypothesis held or did not is an important open problem.\\

Klawe showed the lower bound for the whole $black$-$white$ $pebbling$ $game$ for the pyramid graphs \cite{k:bwpyr}. The advantage of the pyramid graphs is that the number of nodes is polynomial in the height of the tree. Thus for various application of the pebbling game, it is possible that lower bounds for the pyramid graphs could result in better bounds. We thus suggest the following open problem :

\begin{open}
Show upper bounds and lower bounds for the $fractional $ $pebbling$ $game$ on pyramid graphs.
\end{open}

\newpage

\section*{Acknowledgements} 
\addcontentsline{toc}{section}{Acknowledgements}

I would like to thank my advisors Stephen A. Cook and Toniann Pitassi. They originally suggested
this problem and provided many of the key arguments and approaches found in the proof. My
experience would not have been as successful nor as pleasant without the weekly meetings and
constant support they provided.

\smallskip

I would like to thank the authors of \cite{c:pebjournal}, whose work was fascinating and formed the
basis for mine. Particularly, I would like to thank Dustin Wehr who was very helpful at various
points during my research.

\smallskip

Finally, I must thank my family and friends, whose love and support has carried me to and
through this wonderful experience.

\hfill \textit{Frank Vanderzwet}

\newpage

\addcontentsline{toc}{section}{References}
\bibliographystyle{cell}
\bibliography{frank}

\end{document}